\newlength{\dhatheight}
\Crefname{conditions}{Cond.}{Conds.}
\newcommand{\ex}[1]{\mathbb E\left[#1\right]}
\newenvironment{numberedtheorem}[1]{%
	\begin{theorem}}{\end{theorem}\addtocounter{theorem}{-1}}
\newenvironment{numberedlemma}[1]{%
	\begin{lemma}}{\end{lemma}\addtocounter{lemma}{-1}}
\newtheorem{claim}{Claim}
\newtheorem{lemma}[claim]{Lemma}
\newtheorem{theorem}[claim]{Theorem}
\newtheorem{definition}[claim]{Definition}
\newtheorem{corollary}[claim]{Corollary}
\DeclareMathOperator{\argmax}{argmax}
\begin{document}
	
	\begin{titlepage}
		\title{Sample Complexity for Non-truthful Mechanisms}
		
		\author{Jason Hartline\footnote{Northwestern University, hartline@eecs.northwestern.edu}\,\, and Samuel Taggart\footnote{Oberlin College, Sam.Taggart@oberlin.edu}}
		
		\maketitle
		\begin{abstract}

This paper considers the design of non-truthful mechanisms from
samples.  We identify a parameterized family of mechanisms with
strategically simple winner-pays-bid, all-pay, and truthful payment
formats.  In general (not necessarily downward-closed)
single-parameter feasibility environments we prove that the family has
low representation and generalization error.  Specifically,
polynomially many bid samples suffice to identify and run a mechanism
that is $\epsilon$-close in Bayes-Nash equilibrium revenue or welfare
to that of the optimal truthful mechanism with high probability.


		\end{abstract}

	\end{titlepage}

	\renewcommand*{\thefootnote}{\arabic{footnote}}

\newcommand{\setsize}[1]{{\left|#1\right|}}

\newcommand{\floor}[1]{
{\lfloor {#1} \rfloor}
}
\newcommand{\bigfloor}[1]{
{\left\lfloor {#1} \right\rfloor}
}

\newcommand{\super}[1]{^{(#1)}}

\newcommand{\inteval}[1]{\Big[#1\Big]}
\newcommand{\tinteval}[1]{\left[#1\right]}
\newcommand{\distcond}[2]{\left.#1\vphantom{\big|}\right|_{#2}}
\newcommand{\evalat}[2]{\left.#1\vphantom{\big|}\right|_{#2}}

%
%
\newcommand{\given}{\,\mid\,}

\newcommand{\prob}[2][]{\text{\bf Pr}\ifthenelse{\not\equal{}{#1}}{_{#1}}{}\!\left[{\def\givenn{\middle|}#2}\right]}
\newcommand{\expect}[2][]{\text{\bf E}\ifthenelse{\not\equal{}{#1}}{_{#1}}{}\!\left[{\def\givenn{\middle|}#2}\right]}

\newcommand{\tparen}{\big}
\newcommand{\tprob}[2][]{\text{\bf Pr}\ifthenelse{\not\equal{}{#1}}{_{#1}}{}\tparen[{\def\given{\tparen|}#2}\tparen]}
\newcommand{\texpect}[2][]{\text{\bf E}\ifthenelse{\not\equal{}{#1}}{_{#1}}{}\tparen[{\def\given{\tparen|}#2}\tparen]}

\newcommand{\sprob}[2][]{\text{\bf Pr}\ifthenelse{\not\equal{}{#1}}{_{#1}}{}[#2]}
\newcommand{\sexpect}[2][]{\text{\bf E}\ifthenelse{\not\equal{}{#1}}{_{#1}}{}[#2]}

\newcommand{\suchthat}{\,:\,}

\newcommand{\partialx}[2][]{{\tfrac{\partial #1}{\partial #2}}}
\newcommand{\nicepartialx}[2][]{{\nicefrac{\partial #1}{\partial #2}}}
\newcommand{\dd}{{\mathrm d}}
\newcommand{\ddx}[2][]{{\tfrac{\dd #1}{\dd #2}}}
\newcommand{\niceddx}[2][]{{\nicefrac{\dd #1}{\dd #2}}}
\newcommand{\grad}{\nabla}

\newcommand{\symdiff}{\triangle}

	\newcommand{\run}{\text{run}}
\newcommand{\design}{\text{design}}
\newcommand{\polyd}{p_{\design}}
\newcommand{\polyr}{p_{\run}}
\newcommand{\numd}{m_{\design}}
\newcommand{\numr}{m_{\run}}

\newcommand{\num}{T}

\newcommand{\numag}{n}

\newcommand{\agind}[1][i]{_{#1}}
\newcommand{\noaccents}[1]{#1}
\newcommand{\marginal}[1]{#1'}

\newcommand{\newagentvar}[3][\noaccents]{%
\expandafter\newcommand\expandafter{\csname #2\endcsname}{#1{#3}}%
\expandafter\newcommand\expandafter{\csname #2s\endcsname}{#1{\boldsymbol{#3}}}%
\expandafter\newcommand\expandafter{\csname #2smi\endcsname}[1][i]{#1{\boldsymbol{#3}}_{-##1}}%
\expandafter\newcommand\expandafter{\csname #2i\endcsname}[1][i]{#1{#3}\agind[##1]}%
\expandafter\newcommand\expandafter{\csname #2ith\endcsname}[1][i]{#1{#3}_{(##1)}}%
}

\newagentvar{quant}{q}
\newagentvar{val}{v}
\newagentvar{price}{p}
\newagentvar{bid}{b}
\newagentvar{alloc}{x}
\newagentvar{util}{u}
\newagentvar[\tilde]{balloc}{\alloc}
\newagentvar[\tilde]{bprice}{\price}
\newagentvar[\hat]{epalloc}{\alloc}
\newagentvar[\bar]{ralloc}{\alloc}
\newagentvar{dist}{F}
\newagentvar{dens}{f}
\newagentvar{virt}{\phi}
\newagentvar{cumvirt}{\Phi}
\newagentvar{bdist}{G}
\newagentvar{Sur}{\Psi}
\newagentvar{sur}{\psi}
\newagentvar{Wal}{W}
\newagentvar{wal}{w}
\newagentvar{rank}{r}
\newagentvar{strat}{s}
\newagentvar{rev}{R}
\newagentvar[\marginal]{marg}{\rev}
\newagentvar[\hat]{esur}{\sur}
\newagentvar[\hat]{eSur}{\Sur}
\newagentvar[\bar]{isur}{\sur}
\newagentvar[\bar]{iSur}{\Sur}
\newagentvar[\bar]{ialloc}{\alloc}
\newagentvar[\hat]{ealloc}{\alloc}

\newagentvar{murev}{P}
\newagentvar[\bar]{imurev}{\murev}

\newagentvar{error}{\epsilon}

	\newcommand{\st}[1]{\color{red} #1 \color{black}}

\section{Introduction}
%
%
The classical theory of revenue-maximizing mechanism design requires
knowledge of agents' value distributions.  As a result, the sample
complexity of revenue maximization has received significant attention
in recent years.  This work has placed Bayesian mechanism design on
more practical footing by analyzing the amount of sampled data
necessary to produce a nearly-optimal mechanism.  A key assumption is
that samples are agents' values;  e.g.\ they are obtained from past
runs of a truthful mechanism.  Moreover, the mechanisms produced are
themselves truthful, and hence generate data suitable for future
inference.

For many applications, however, truthful data may not be available.
Samples are often equilibrium bids, produced by mechanisms where
agents are not incentivized to report their values.  Moreover,
practical constraints often require a designer to employ non-truthful
payment formats like those of the first-price (i.e.\ winner-pays-bid)
or all-pay auction.  This paper develops a theory of non-truthful
mechanism design from samples.  We identify a family of non-truthful
mechanisms that have near-optimal revenue or welfare and require only
polynomially many samples to design and implement.  Our mechanisms'
performance guarantees hold in equilibrium under standard non-truthful
(or truthful) payment formats.  We assume sampled data come from
mechanisms within our family, and therefore need not be truthful.  Our
mechanisms may thus be redesigned as necessary when the environment
changes.

%
%

%
%
Both practical and theoretical considerations necessitate the study of
non-truthful mechanism design. In some common applications, outcomes
are contracts, e.g., government procurement auctions, variable
commission mechanisms of third-party listing agencies like
Booking.com, and ad exchanges. For these applications, the theory of
winner-pays-bid mechanisms is most appropriate.  For games of effort
-- like crowdsourcing contests \citep[e.g.,][]{CHS-15}, forecasting
\citep[e.g.,][]{osb-89}, and peer prediction \citep[e.g.,][]{DG-13} --
the theory of all-pay mechanisms is most appropriate.  \citet{AM06}
discuss a number of other pragmatic concerns and describe why truthful
mechanisms are rarely seen in practice.  These observations justify
revisiting standard theoretical questions in mechanism design (such as
sample complexity) with non-truthfulness imposed as an exogenous
constraint.

Non-truthful mechanisms can also exhibit surprising theoretical advantages.
For
example, \citet{FH-18} proved that non-truthful mechanisms can be
strictly better than truthful mechanisms when robustness to
distributional assumptions is desired.  In other words, the so-called
``revelation principle'' fails when looking for simple and robust
mechanisms. While understanding the theoretical limits of non-truthful mechanisms is a potentially fruitful avenue of future research, such studies are hindered by the sparse literature on designing such mechanisms.
Non-truthful mechanisms add two significant complications to the study
of sample complexity.  First, for non-truthful mechanisms it should be
assumed that there is sample access to equilibrium bids rather than
values of the agents.  Second, and more fundamentally, the literature
has not yet shown how to design non-truthful mechanisms that have
near-optimal equilibria with only estimates of the value
distribution. (This task is straightforward with full distributional
knowledge; see \Cref{app:unrevelation}.) To circumvent these issues,
we give a black-box reduction from non-truthful mechanism design in
arbitrary single-parameter settings to the design of rank-based
position auctions for i.i.d. agents. In the latter setting, bid
distributions are well-suited for estimation and it suffices to infer
a limited set of parameters to design a near-optimal mechanism. The
resulting approach is compatible with winner-pays-bid, all-pay, or
truthful payment formats.

A theory of sample complexity for non-truthful mechanisms further
requires careful delineation of two aspects of sample complexity:
where samples come from and how samples are used.  We assume samples
are from the bid distributions of non-truthful mechanisms.  A
non-issue for truthful sample complexity, it is important for
non-truthful formats which mechanism the bids are from; the Bayes-Nash
equilibrium bid distributions of distinct non-truthful mechanisms are
generally distinct. Our mechanisms use samples in two ways. First,
samples can be used at design time to select the mechanism to be run,
in which case they are usually data from past mechanisms, and can be
assumed to be observed by the agents before bids are placed. Second,
samples can be used at run time when the mechanism executes, in which
case they are viewed as a source of randomness as agents are
bidding. Run-time samples are most plausible when they are from
concurrent runs of the same mechanism, e.g., in high-frequency
settings such as advertising auctions.  In analyses of mechanisms from
samples, incentives and performance guarantees hold with high
probability in design-time samples and in expectation in run-time
samples.

Both uses have analogs in the literature on truthful mechanism design
from samples. Most papers on sample complexity use design-time samples
and elide computational considerations around producing desirable
allocations and truthful payments. The literature on black-box
reductions from Bayesian incentive compatible mechanism design to
Bayesian algorithm design, meanwhile, relies heavily on run-time
samples to achieve exact incentive compatibility. Even for the
simplest setting of single-parameter agents \citep[]{HL-15} designing
an exactly incentive compatible black-box reduction from design-time
samples alone remains an open problem. A more detailed discussion of
the literatures on sample complexity and black-box reductions, as well
as the way they use samples, can be found in the related work section
below.

\paragraph{Problem Statement} We consider the problem of designing good mechanisms from samples in
general single-parameter environments with independently
distributed values in Bayes-Nash equilibrium where a general set
system governs the subsets of agents that can be simultaneously served.
Truthful mechanisms for this environment are well understood.  The
Vickrey-Clarke-Groves mechanism maximizes welfare, and a
straightforward generalization of \citet{M81} gives the truthful
mechanism that maximizes expected revenue \citep[e.g.,][]{har-13}.  The
sample complexity of truthful mechanisms of this setting was largely resolved
by \citet{DHP16}, \citet{GN-17}, and \citet{GHZ19}.

We generalize truthful sample complexity to non-truthful mechanisms in the following way. The problem of {\em non-truthful
  sample complexity} is to identify in a parameterized family of winner-pays-bid (or all-pay) mechanisms and polynomials $\polyd$ and $\polyr$ such that with
$n$-agent environments and desired loss $\epsilon$:
\begin{enumerate}[C1.]
    \item \label[conditions]{cond:inference} With $\numd =
      \polyd(n,\epsilon^{-1},\delta^{-1})$ design-time samples of profiles of
      Bayes-Nash equilibrium bids from any mechanism in the family,
      parameters of the designed mechanism can be selected.

    \item \label[conditions]{cond:incentives} With $\numr =
      \polyd(n,\epsilon^{-1})$ run-time samples of profiles of Bayes-Nash
      equilibrium bids in the selected mechanism, the selected
      mechanism can be run.
    
    \item\label[conditions]{cond:performance} With probability in the
      $\numd$ design-time samples of at least $1-\delta$, the
      expected performance, in agents' values and the $\numr$ run-time
      samples of the selected mechanism, is at most $\epsilon$ less
      than that of the Bayesian optimal mechanism.\footnote{Multiplicative
        versions of \Cref{cond:performance} are also interesting. In
        the multiplicative version a mechanism with expected welfare or revenue at least $(1-\epsilon)$ times optimal is
        required.}
\end{enumerate}

The following story fits the above problem and is implicit in previous papers on mechanism design from samples. Per a standard interpretation of the Bayesian model for auctions, a designer aims to
run a mechanism on agents drawn from one or several large populations.
A mechanism is sought that performs well on a fresh draw of agents
from each population. Our non-truthful mechanism designer
fixes a large parameterized family of mechanisms and has
independently drawn profiles of historical bids in one mechanism in
the family.  The designer uses these historical bid profiles as
design-time samples to select new parameters of the mechanism.  The
agents adapt to the new equilibrium in the new mechanism.  The
designer collects historical samples in the new mechanism and uses
them as run-time samples in its execution.

As is common in the literature on Bayes-Nash mechanism design, we
consider runtime samples and agent strategies which follow a steady
state equilibrium. While it is beyond the scope of this paper to model
the adaptive process by which the agents might learn this equilibrium,
we shall see that the mechanisms we produce have straightforward
bidding problems for each agent. Similarly, we consider it beyond the
scope of the paper to explicitly model the process by which runtime
samples are obtained. For motivation, however, we note here several
scenarios which justify their use. In practical applications such as
ad auctions, agents bid in advance of the auction, and it is possible
to batch the bid collection for many individual executions of the
mechanism together.  For these batched executions, the run-time
samples can be from the other bid profiles that are collected within
the same batch. If batching is infeasible, it may still be possible to
produce run-time samples in an online manner by taking bid data from the
most recent iterations of the mechanism.

\paragraph{Approach and Results.}

We solve the stated problem for general
single-parameter environments and independent but non-identically
distributed agents.  With $n$ agents, polynomial in $n$,
$\epsilon^{-1}$, and $\delta^{-1}$ design-time samples are sufficient to identify a
mechanism that, with polynomially many run-time samples and probability at least $1-\delta$, approximates the performance of the optimal
mechanism to within precision $\epsilon$ in the following
environments:
\begin{itemize}
	\item (non-truthful) winner-pays-bid and all-pay mechanisms,
          additive welfare approximation, and bounded value
          distributions;\footnote{Because of asymmetries in the value distributions of agents and the feasibility environment, welfare maximization subject to non-truthful payment sematics is not trivial. See Appendix~\ref{app:unrevelation} for more details.}
	\item (non-truthful) winner-pays-bid and all-pay mechanisms,
          additive revenue approximation, and bounded and regular
          value distributions; and
	\item truthful mechanisms, multiplicative revenue
          approximation, and (unbounded) regular value distributions.\footnote{These results are more general than the results of \citet{DHP16}, who require downward closure.}
\end{itemize}
Regular distributions are ones that satisfy a natural convexity
property; details are given in \Cref{prelims}.

A key primitive in our results is the \emph{i.i.d.\ rank-based position
auction}, a model popularized in the study of ad auctions on search engines,
cf.\ \citet{JM-08}.  In such an auction agents are assigned to
positions, with higher positions having higher allocation
probabilities.  In an i.i.d.\ position auction the agents' values are
drawn from the same distribution.  Equilibria in these auctions are
unique and efficient \citep{CH13}, and are simple to compute using standard characterizations of Bayes-Nash equilibrium.  One way to view our results is as
a reduction from sample complexity in general single-parameter
environments with non-identically distributed agents to inference and design in
i.i.d.\ position auctions. 

To implement this reduction, our mechanisms use run-time
samples from their own bid distribution.  Doing so effectively
replaces competition between agents from distinct distributions with
competition between agents with identical distributions (from the
run-time samples). We show that to any agent, this is strategically equivalent to bidding in an i.i.d.\ position auction. Our mechanism therefore inherits the simple equilibrium structure of the latter mechanisms. We further show that mechanisms
in our family exist that are approximately optimal, i.e., the representation
error of our family is small, and we reduce the problem of analyzing the
generalization error to the problem of estimating appropriate expected order
statistics of design-time samples from bids in any i.i.d.\ position
auction with the same agent distributions.  For i.i.d.\ position
auctions with standard payment formats like winner-pays-bid and
all-pay, \citet{CHN16} solve this inference problem.  For the
truthful format, we give a straightforward solution and analysis.

\paragraph{Related Work} Typical welfare and revenue analyses of non-truthful mechanisms take standard auctions and analyze their welfare in worst-case equilibrium. Notable examples include \citet{ST13}, who prove that first-price and all-pay auctions have welfare within a constant fraction of optimal, and \citet{HHT14}, who derive a similar result for revenue. See \citet{RST17} for a complete survey. Unfortunately, except in settings such as i.i.d.\ position auctions, where equilibrium is efficient \citep{CH13}, asymmetries in the distributions and set of feasible allocations seem to guarantee a nontrivial fraction of welfare or revenue is lost in worst-case instances. In fact, \citet{DK15} show that standard analysis techniques cannot prove near-optimal revenue results for non-truthful mechanisms in a variety of common settings. Notably, these results only apply to designs and analyses which are agnostic to agents' distributions.

We consider mechanism design from data to circumvent these negative results. Design of truthful mechanisms from data has been studied extensively. For single-parameter mechanism design, this includes the learned finite support auctions of \citet{elk-07}, the learned monopoly reserve of \citet{DRY10}, and \citet{CR-14} as well as followups \citep{DHP16,RS16,MR15,GN-17,GHZ19}. A robust literature on the sample complexity of multi-parameter mechanism design also exists. As we only consider single-parameter agents, this literature is beyond the scope of our discussion. Relatively little work exists on the design of non-truthful mechanisms from data. A notable exception is \citet{CHN14}, whose inference methodology for i.i.d.\ position auctions are used in our own framework. All papers mentioned above use only design-time samples.

Our design approach has close connections to the work on reducing
Bayesian incentive compatible mechanism design to Bayesian algorithm
design. This work relies on both design-time and run-time samples. The
reductions of \citet{HL-15}, \citet{HKM11}, and \citet{BH11} show that
given an algorithm $\mathcal A$ and design-time sample access to
agents' value distributions, one can construct an $\epsilon$-Bayesian
incentive compatible mechanism with at most $\epsilon$ less expected
welfare than $\mathcal A$ in polynomial time. The constructed
mechanisms can be implemented without reliance on additional samples,
but fall short of exact incentive compatibility. For limited families
of preferences, the former two papers show how to use run-time samples
to achieve exact incentive compatibility, and \citet{DHKN-17} recently
showed how to use run-time samples to produce a fully general exactly
incentive compatible reduction. Achieving exact incentive
compatibility in blackbox reductions with only design-time samples is
an open problem for even single-parameter agents. Indeed, eliminating
the dependence on run-time samples in our setting would likely imply a
breakthrough in this regard.

\paragraph{Organization}
In \Cref{prelims}, we lay out notation and preliminary results. We
present our parametrized family of mechanisms, which we term
surrogate-ranking mechanisms, in \Cref{sec:rank-based}, and analyze
their equilibria. In \Cref{sec:BAR}, we show that for any set of value
distributions, there exists a surrogate-ranking mechanism that uses
polynomially many run-time samples and obtains nearly-optimal welfare
or revenue. Finally, in \Cref{sec:inference}, we show how to learn
such a mechanism using polynomially many design-time samples.

	\section{Preliminaries}
\label{prelims}

\newcommand{\feasibles}{{\mathcal X}}

This work considers the \emph{single-parameter} \emph{independent
  private value} model of mechanism design.  We describe this model in
{\em quantile} space where the geometry of approximation mechanisms is
more transparent \citep[cf.][]{har-13}.  There are $n$ agents drawn
independently and uniformly at random from $n$ populations.  Agents
are distinguished by their quantile with respect to their own
population.  The {\em quantile} $\quanti$ of agent $i$ is the measure
of population $i$ with higher values.  The {\em value function}
$\vali$ of population $i$ maps agent $i$'s quantile to her value as
$\vali(\quanti)$ and, with a uniformly drawn quantile, induces a value
distribution.  Profiles of agent values and quantiles are
denoted by $\vals = (\vali[1],\ldots,\vali[n])$ and $\quants =
(\quanti[1],\ldots,\quanti[n])$, respectively.

An {\em allocation} is $\allocs = (\alloci[1],\ldots,\alloci[n])$
where $\alloci \in\{0,1\}$ is an indicator for agent $i$ being served.
The space of feasible alloctions is given by $\feasibles \subset
\{0,1\}^n$.  (Notably, we do not require that $\feasibles$ be downward
closed.)  Agent $i$ can be assigned a non-negative payment denoted
$\pricei$ and her utility is linear in allocation and payment as
$\vali(\quanti)\,\alloci - \pricei$.

A mechanism takes as input a profile of bids $\bids =
(\bidi[1],\ldots,\bidi[n])$ and outputs a feasible allocation $\allocs
\in \feasibles$ and agent payments $\prices$.  A mechanism consists of
an \emph{allocation algorithm} $\ballocs(\bids)$, which maps bid
profiles to a feasible allocation, and a payment rule
$\bprices(\bids)$, which maps bid profiles to a non-negative payment
for each agent.  A standard allocation algorithm is {\em
  highest-bids-win} which is defined by $\ballocs(\bids) \in
\argmax_{\allocs \in \feasibles}\, \sum\nolimits_i\bidi\,\alloci$. We
consider payment rules defined directly from the allocation algorithm
according to standard payment formats.  The \emph{winner-pays-bid}
format has payment rule $\bpricei(\bids)=\bidi\, \balloci(\bids)$, and
the \emph{all-pay} format has payment rule $\bprice(\bids)=\bidi$.
Mechanisms with these payment formats do not have truth-telling as an
equilibrium.  The truthful payment format is defined according to the
payment identity (below, \Cref{thm:myerson}) and can be implemented as
an integral or with any of a number of unbiased estimators with
expectation equal to the integral (see, e.g., \citealp{HL-15}).

We analyze non-truthful mechanisms in Bayes-Nash equilibrium (BNE):
each agent's report to the mechanism is a best response to the
distribution of bids induced by other agents' strategies.  The
strategy of agent $i$ is denoted $\strati$ and maps the agent's
quantile to a bid and, with a uniformly drawn quantile, induces a bid
distribution.  The mechanism $(\ballocs,\bprices)$, the agents'
strategies $\strats$, and the distribution over quantiles induce
interim allocation and payment rules.  Agent $i$'s {\em interim
  allocation rule} is
$\alloci(\quanti)=\expect[\quantsmi]{\balloci(\strats(\quants))}$ and
{\em interim payment rule}
$\pricei(\quanti)=\expect[\quantsmi]{\bpricei(\strats(\quants))}$.
\citet{M81} characterized the interim allocation and payment rules
that arise in BNE when agents' values are independently distributed.

\begin{theorem}[\citealp{M81}]
\label{thm:myerson}
For independently distributed agents, interim allocation and payment
rules are induced by a Bayes-Nash equilibrium with onto strategies if
and only if for each agent $i$,
\begin{enumerate}
\item (monotonicity) 
\label{thmpart:monotone}
allocation rule $\alloci(\quant)$ is monotone
non-increasing in $\quanti$, and
\item 
\label{thmpart:payment}
(payment identity) payment rule $\pricei(\quanti)$ satisfies
$\pricei(\quanti) = \vali(\quanti)\,\alloci(\quanti) +
\int_{\quanti}^{1} \alloci(r) \, \vali'(r)\, d r + \pricei(1)$.
\end{enumerate}
\end{theorem} 

This paper studies the objectives of welfare and revenue.  The
\emph{welfare} of a mechanism is $\expect{\sum_i
  \vali(\quant)\,\alloci(\quanti)}$.  The optimal mechanism for
welfare allocates the value-maximizing feasible set, which is monotone
and therefore implementable with payments via
Theorem~\ref{thm:myerson}.  The \emph{revenue} of a mechanism is given
by $\expect{\sum_i \pricei(\quanti)}$.  The revenue of a mechanism is
easily analyzed in quantile space in terms of revenue curves and
marginal revenue as follows.
\begin{lemma}[\citealp{M81}; \citealp{BR89}]
\label{lem:vvals}
In BNE, the expected payment of an agent $i$ satisfies
$$\sexpect[\quanti]{\pricei(\quanti)} 
= 
\sexpect[\quanti]{-\alloci'(\quanti)\,\revi(\quanti)} + \revi(1)\,\alloci(1)
= 
\sexpect[\quanti]{\margi(\quanti)\,\alloci(\quanti)} + \revi(0)\,\alloci(0) 
$$
where the {\em revenue curve} $\revi(\quanti) = \vali(\quanti)\,\quanti$ gives the revenue from posting price $\quanti$ and the {\em marginal revenue} $\margi(\quanti) = \vali(\quanti) + \vali'(\quanti) \quanti$ is its derivative.  (Note that the derivatitives of the allocation rule $\alloci'(\cdot)$ and value function $\vali'(\cdot)$ are non-positive.)
\end{lemma}
The first equality follows from revenue equivalence and noting that
the allocation rule $\alloci$ is equivalent to offering a randomized
posted price with price distributed according to the density function
$-\alloci'(\cdot)$ with a pointmass of $\alloci(1)$ at price
$\vali(1)$. The second equality follows from integration by parts.
The optimal mechanism can be easily identified from the second
equality as maximizing the surplus of marginal revenue.  The value
distributions are called {\em regular} when the revenue curves are
concave, or equivalently, the marginal revenues are monotonically
non-increasing.  

In many environments of interest, the additive terms
$\revi(0)\,\alloci(0)$ and $\revi(1)\,\alloci(1)$ are zero.  For
example, when the strongest agent $\quanti=0$ in the population has
finite value $\vali(0)$, then the revenue when we post price
$\vali(0)$ is $\revi(0) = 0$ as only a zero measure of the population
will buy at such a price.  When the weakest agent in the
population $\quanti=1$ has value $\vali(1) = 0$ then $\revi(1) = 0$ as
the revenue from posting price 0 is zero. 


\paragraph{Position Auctions} 

{\em I.i.d.\ rank-by-bid position auctions} play a fundamental role in
our analysis.  In i.i.d.\@ environments the agents' value functions are
identical $\vali = \vali[j]$ for all agents $i$ and $j$.  An $n$-agent
position auction is defined by $n$ position weights
$\wali[1]\geq\ldots\geq \wali[n] \in [0,1]$ and an outcome is an assignment of
agents to positions.  If agent $i$ is assigned to position $j$ her
allocation is $\alloci = 1$ with probability $\wali[j]$ and zero
otherwise, i.e., $\expect{\alloci \given \text{agent $i$ is assigned
    slot $j$}} = \wali[j]$.  The rank-by-bid allocation algorithm
assigns agents to positions assortatively by bid.  The following
theorem shows that Bayes-Nash equilibria in rank-by-bid position
auctions are straightforward.

\begin{theorem}[\citealp{CH13}]
\label{thm:positionuniqueness}
In i.i.d.\ position environments, the rank-by-bid winner-pays-bid and
all-pay auctions have a unique and welfare-maximizing Bayes-Nash
equilibrium (in which agents are assigned to positions in order of
their true values), i.e., $\strati(\cdot) = \strati[j](\cdot)$ for all
agents $i$ and $j$.
\end{theorem}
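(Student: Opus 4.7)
The plan is to prove existence, efficiency, and uniqueness in sequence, leveraging Theorem~\ref{thm:myerson} throughout. For existence, I would guess a symmetric strategy profile in which every agent bids according to a common function $s:[0,\infty)\to[0,\infty)$. If $s$ is monotone non-decreasing, then the interim allocation of an agent with value $v$ is $x(v) = \sum_j w_j\, q_j(v)$, where $q_j(v)$ is the probability that $v$ has rank $j$ among $n$ i.i.d.\@ draws from the common value distribution $F$. Since $w_1 \geq \cdots \geq w_n$, this $x(v)$ is monotone non-decreasing in $v$. I would then determine $s$ by imposing Myerson's payment identity (part~\ref{thmpart:payment} of Theorem~\ref{thm:myerson}): in the winner-pays-bid case $s(v)\, x(v) = v\, x(v) - \int_0^v x(z)\, dz$, giving $s(v)= v - \int_0^v x(z)\, dz / x(v)$; in the all-pay case $s(v) = v\, x(v) - \int_0^v x(z)\, dz$. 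A direct computation shows each is strictly increasing on the support of $F$, and global optimality (not just local) follows from the usual single-crossing argument for monotone allocation rules.

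Efficiency is then immediate: because $s$ is strictly monotone, the rank of agent $i$'s bid coincides with the rank of $v_i$. So the highest-value agent is assigned position $1$ (weight $w_1$), the second-highest to position $2$, and so on. This is exactly the assortative matching that maximizes $\sum_i v_i\, x_i$, since assigning the largest weights to the largest values is optimal by the rearrangement inequality.

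For uniqueness, the key is to use Myerson's monotonicity (part~\ref{thmpart:monotone} of Theorem~\ref{thm:myerson}) to constrain arbitrary BNE. In any BNE with strategies $(s_1,\ldots,s_n)$, each interim allocation $x_i(v)$ must be monotone non-decreasing. Because the mechanism is rank-by-bid, $x_i$ is determined by how $s_i(v)$ ranks against the induced bid distributions of the other agents; monotonicity of $x_i$ therefore forces $s_i$ itself to be non-decreasing (up to a measure-zero set). Given that all $s_i$ are monotone, I would then invoke the symmetry of the environment: relabeling agents leaves both the joint distribution of values and the rank-based allocation rule invariant, and with all $s_i$ monotone and best responding to the same distribution of opponent bids, a standard permutation-plus-best-response argument collapses any BNE to a symmetric one. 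Once the BNE is symmetric and monotone, the interim allocation is pinned down by the rank structure and the payment identity determines $s$ uniquely, matching the equilibrium constructed above.

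The main obstacle is ruling out asymmetric monotone equilibria in the uniqueness step, since Myerson's characterization alone only gives monotonicity of each $s_i$. The cleanest route I know is to assume, for contradiction, that two agents use distinct monotone strategies $s_i \neq s_j$ on a positive-measure set, then compare each agent's deviation gain when mimicking the other's strategy: asymmetry in the induced bid distributions generates a strictly profitable deviation because the rank-by-bid structure is symmetric in the identity of competitors. This contradicts the BNE condition and forces $s_i = s_j$ almost everywhere, completing the proof.
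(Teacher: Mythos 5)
The paper does not prove Theorem~\ref{thm:positionuniqueness}; it imports it as a black box from \citet{CH13} and builds on it downstream (e.g., in Theorem~\ref{thm:sruniqueness}), so there is no in-paper proof to compare against. Evaluating your sketch on its own terms: the existence and efficiency parts are in order. Constructing a symmetric strategy from the payment identity, observing that the induced interim allocation $x(v)=\sum_j w_j\,q_j(v)$ is monotone because $w_1\geq\cdots\geq w_n$, and arguing efficiency from assortativity once bids are monotone in values is the standard route and is fine modulo routine boundary technicalities (e.g.\ handling $x(v)=0$ in the winner-pays-bid strategy formula).

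The gap is in the uniqueness step, exactly where you flag it, and the proposed fix does not close it. The mimicking argument fails because a deviating agent never truly steps into the other agent's shoes: if agent $i$ switches to $s_j$, agent $j$ had faced agent $i$ playing $s_i$, whereas post-deviation agent $i$ faces agent $j$ playing $s_j$. The opponent bid profiles differ, so the deviator's expected utility need not match agent $j$'s, let alone strictly exceed agent $i$'s equilibrium utility; no contradiction is produced. Likewise, permutation invariance of the rank-based rule only says the \emph{set} of equilibria is closed under relabeling agents, not that any particular BNE must itself be symmetric. The argument in \citet{CH13} is genuinely more delicate: after establishing via Myerson that each agent's strategy is monotone, they analyze the induced per-agent bid distributions and use the rank-based allocation structure together with the payment identity to force all those bid distributions to coincide, from which the symmetric equilibrium emerges as the only candidate. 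That forced equality of bid distributions is the real content of the theorem, and your sketch asserts rather than derives it.
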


	\section{Surrogate-Ranking Mechanisms}
\label{sec:rank-based}

In this section, we describe the parameterized family of
mechanisms for which we demonstrate polynomial sample complexity.  A
mechanism in this family has run-time sample access to the
equilibrium bid distribution of each agent.  An agent's bid can be
compared to these samples to estimate the agent's strength relative to
their value distribution.  The mechanism then allocates solely on the
basis of the agents' ranks.  The choice of parameters will determine
the exact mapping between ranks and allocations.  This approach can be
paired with any of the standard payment formats: winner-pays-bid,
all-pay, or truthful.


\label{sec:SAs}

\begin{definition}
\label{d:SRA}
A {\em surrogate-ranking mechanism} (SRM) is parameterized by $n\num$
{\em surrogate values} $\Surs$, with $\Suri =
\{\suri^1\geq\ldots\geq\suri^{\num}\}$ for each agent $i$.  The
input to the mechanism is a profile of bids.
\begin{enumerate}[1.]
\item A surrogate value is
calculated for each agent $i$ as:
\begin{enumerate}[(a)]
\item draw $\num-1$ run-time samples from the
agent's bid distribution, 
\item calculate the rank $\ranki$ of the agent's bid relative to
  these samples,
\item select the agent's surrogate value $\suri = \suri^{\ranki}$ according to the
  agent's sample rank.
\end{enumerate}
\item For space $\feasibles$ of feasible allocations, the algorithm allocates
to maximize the {\em surrogate surplus} $\argmax_{\allocs \in \feasibles}
\sum_i \suri\,\alloci$.
\item Payments are assigned according to any standard payment format,
  e.g., winner-pays-bid, all-pay, or truthful.
\end{enumerate}
\end{definition}

In the paper we will focus on surrogate ranking mechanisms as defined
above where the allocation is chosen to maximize the surrogate
surplus, i.e., $\sum_i \suri\,\alloci$.  Our methods extend in a
straightforward manner to settings where computing such an allocation
is intractable.  For approximation algorithms where surrogate
allocations is monotone in surrogate values, all analyses in this
paper hold with an additional multiplicative performance loss equal to
the approximation factor of the algorithm.  Non-monotone algorithms
can be made monotone via the methods of \citet{HL-15} or
\citet{HKM-15}.

Subsequently in \Cref{sec:inference}, we will show how to identify good
surrogate values from design-time samples.  The remainder of this
section is devoted to characterizing equilibria in surrogate ranking
mechanisms.

\subsection{Equilibria of Surrogate-Ranking Mechanisms}

\label{sec:eq}

We now analyze the equilibrium of winner-pays-bid and all-pay
surrogate-ranking mechanisms.\footnote{In Section~\ref{sec:inference},
  we also consider truthful SRMs. Equilibrium analysis for truthful
  SRMs is trivial.}  To do so, we first give a natural generalization
of Bayes-Nash equilibrium to mechanisms with run-time samples from
its own bid distribution.

\begin{definition} 
A {\em stationary equilibrium (with samples)} in a mechanism (with
samples) is a strategy profile $\strats$ where the strategy of each
agent is in best response to distribution of bids induced by the
strategies in the mechanism with sample access to the same bid
distributions.
\end{definition}

We will show that stationary equilibria of surrogate ranking mechanisms are easy to characterize. Each agent plays the unique Bayes-Nash equilibrium strategy of an i.i.d.\ position auction for their distribution in a position environment derived from the choice of surrogate values.
Specifically, rather than competing with
other agents in the mechanism, an agent $i$'s bid competes with other
bids from her bid distribution which gives an outcome which is
equivalent to the equilibrium of a position auction with
agents with values drawn only from population $i$, who share a distribution.

We begin by analyzing the distribution of assigned surrogate values in
a stationary equilibrium.  Recall, each agent's strategy $\strati$, on
a uniform quantile, induces a distribution over bids.  Notice that, in
the surrogate-ranking mechanism with sample access to this bid
distribution, the surrogate value assigned to $i$ will be uniformly
distributed from the set $\Suri$ of $i$'s surrogate values. This implies:

\begin{lemma}
\label{lem:uniformity}
In any stationary equilibrium of a surrogate ranking mechanism, and
any agent $i$ and surrogate value $\suri^j\in \Suri$, the ex ante
probability agent $i$ is assigned $\suri^j$ is $1/\num$.
\end{lemma}

\Cref{lem:uniformity} implies that in a stationary equilibrium,
the probability of allocation associated with a particular surrogate
value is fully determined by the other agents' sets of surrogate
values, and not by the form of the equilibrium bidding strategies, or
even by the agents' value distributions. This 
characterization of outcomes can be formalized as follows.

\begin{definition}
\label{def:charweights}
For each agent $i$, let $\Suri =
\{\suri^1\geq\ldots\geq\suri^{\num}\}$ be agent $i$'s set of surrogate
values, and let $\epallocs$ denote the surrogate surplus maximizing
allocation rule $\epallocs(\surs) = \max_{\allocs \in \feasibles}
\sum_i \suri\,\alloci$. The \emph{characteristic weights}
$\Wali = \{\wali^1 \geq \ldots \geq \wali^{\num}\}$ for agent $i$ are defined by
calculating the allocation probability associated with each surrogate
when the surrogates of other populations are drawn uniformly at
random, i.e., $\wali^j = \sexpect{ \epalloci (\suri^j,\sursmi)}$ for
each surrogate $j$ and uniform random $\sursmi$ from $\Sursmi$.
\end{definition}

We now show that from each agent's perspective, stationary equilibria
in surrogate-ranking mechanisms look like a position auction among
agents with the same value function.  These agents compete for the
characteristic weights of their population's surrogate values.  Under the
pay-your-bid or all-pay payment formats, they therefore inherit the
equilibrium of rank-based position auctions, which is shown by
\citet{CH13} to be efficient (i.e., to rank agents by values) and
unique.
\begin{theorem}
\label{thm:sruniqueness}
For any profile of value functions $\vals$, surrogate values $\Surs$, and
characteristic weights $\Wals$; the unique stationary equilibrium of
the winner-pays-bid (resp.\ all-pay) SRM is given by each agent $i$
bidding according to the unique and efficient BNE $\strati$ of the
i.i.d.\ winner-pays-bid (resp.\ all-pay) position auction with
position weights $\Wali$ and value function $\vali$.
\end{theorem}
\begin{proof}
Assume an arbitrary stationary equilibrium and consider an agent $i$.
By Lemma~\ref{lem:uniformity}, the stationary equilibrium induces a
uniform distribution over each other agent's assigned surrogate
values. It follows that if agent $i$ is assigned surrogate value
$\suri^j$, then they are allocated with probability
$\wali^j$. Moreover, since the surplus-maximizing allocation algorithm
is monotone, characteristic weights for agent $i$ are monotone as
well. Hence, placing the $j$th highest bid among the run-time samples
will cause $i$ to be assigned the $j$th highest characteristic
weight. Thus, agent $i$ faces the same bidding problem as if they
played in the equilibrium of the i.i.d.\ position auction with
position weights $\wali^1,\ldots,\wali^{\num}$ and value function
$\vali$.  Thus, agent $i$ bids according to the BNE of this
i.i.d.\ position auction.  This BNE is efficient, i.e., bids are in
the same order as values, and unique. 

Uniqueness of the stationary equilibrium follows by uniqueness of
characteristic weights under any stationary equilibrium
(\Cref{def:charweights}), which are determined only by the set of
surrogate values $\Surs$, and the uniqueness of symmetric Bayes-Nash equilibrium in i.i.d.\ position auctions, which follows from revenue equivalence.
\end{proof}

One consequence of \Cref{thm:sruniqueness} is that the bidding problem faced by agents is strategically simple. Given accurate estimates of the characteristic weights, the symmetric equilibrium bids for the corresponding position environment for each population can be computed using \Cref{thm:myerson}.

\subsection{Equivalence of Surrogate Ranking Mechanisms}

Surrogate ranking mechanisms are equivalent for revenue and welfare,
irrespective of their payment format.  It is helpful to relate this
equivalence to the famous revenue equivalence result of \citet{M81}.
In the latter, two mechanisms with the same equilibrium outcome (and
with the same expected payment of the agent with the lowest value in
the support of the distribution, usually zero) have the same expected
revenue.  For instance, with i.i.d.\ distributions the single-item
first-price and second-price auctions have the same equilibrium outcome, i.e.,
the highest valued agent wins, and thus, by revenue equivalence, the
same expected revenue.  With non-identical distributions, these
auctions do not have the same equilibrium outcome and, thus, do not
generally have the same expected revenue.  Our equivalence result, in
contrast, holds for surrogate-ranking mechanisms in asymmetric
environments (distributions and feasibility constraints).

\begin{theorem}
\label{thm:srm-revelation}
For any fixed surrogate values and value functions, the expected
welfare (resp.\ revenue) of the winner-pays-bid, all-pay, and truthful
surrogate-ranking mechanisms in stationary equilibrium with samples
are equal.
\end{theorem}

\begin{proof}
This theorem follows because each agent is playing the symmetric BNE of an i.i.d.\ position auction with characteristic weights that are independent of the payment format. Such BNE are welfare-equivalent for each agent. Welfare equivalence implies, by the usual argument, revenue
equivalence.
\end{proof}

\Cref{thm:srm-revelation} gives a revelation principle for surrogate
ranking mechanisms.  Bounds on the revenue and welfare of the truthful
surrogate ranking mechanism implies the same bounds on that of the
non-truthful ones because their equilibrium outcomes are the same in
expectation.

	\section{Representation Error of Surrogate Ranking Mechanisms}
\label{sec:BAR}

In this section, we show that the representation error of surrogate-ranking mechanisms is small. In other words, there always exists some choice of surrogate values which induces near-optimal expected welfare or revenue. Hence, SRMs satisfy condition \Cref{cond:performance} in our statement of the problem of non-truthful sample complexity. More precisely, we prove the following guarantee:
\begin{theorem}
\label{thm:welfare}
\label{THM:WELARE}
There exists a surrogate-ranking mechanism with winner-pays-bid, all-pay, or
truthful payment semantics which attains a $(1-
O(\sqrt[3]{n/\num}))$-fraction of the optimal welfare in stationary equilibrium. With regular distributions, there exists such a mechanism which attains a $(1-
O(\sqrt[3]{n/\num}))$-fraction of the optimal revenue in stationary
equilibrium.
\end{theorem}
In \Cref{sec:inference}, we show how to estimate a SRM which is nearly welfare- or revenue-optimal among all mechanisms in the family. This mechanism consequently inherits the guarantees of \Cref{THM:WELARE}, up to error from estimation.

Before discussing the proof of \Cref{THM:WELARE} we observe that a less general result
follows from a theorem of \citet{HKM11}. In this paper, the authors
consider an allocation procedure which can be interpreted as a surrogate ranking
mechanism. They show that for agents whose values are
distributed on the interval $[0,1]$, their mechanism has an additive welfare loss of at most $n/(4\sqrt T)$. 
We improve on this welfare guarantee in three ways. First, we derive a guarantee for arbitrary distributions, even those with unbounded support. Second, our bounds will be multiplicative. Finally, our bounds apply to revenue in addition to welfare.\footnote{With the additional assumption that virtual values are bounded below by $-\underline \phi$, the guarantee of \citet{HKM11} applies to revenue as well, with an additional factor of $1+\underline \phi$ applied to the loss.}

To prove \Cref{THM:WELARE} we define a family of mechanisms, \emph{surrogate binning mechanisms}, which coarsen quantile space into uniform bins, assign agents to surrogate values based on their bin, and maximize surplus with respect to the assigned surrogate values. Binning mechanisms interpolate between the structure of the optimal mechanism and SRMs; they treat agents coarsely based on their location in their distribution, whereas the optimal mechanism treats agents based on their exact location in their distribution, and SRMs treat agents coarsely based on their rank among runtime samples. \Cref{THM:WELARE} follows from identifying a set of surrogate values for which the surrogate binning mechanism is nearly optimal and the SRM is not much worse.

We formally define surrogate binning mechanisms in \Cref{SEC:BIN}, and show that for any set of surrogate values where each agent's highest $k$ and lowest $k$ surrogate values are each identical, surrogate ranking and binning mechanisms perform comparably, with the loss depending on $k$. It therefore suffices to find a set of surrogate values satisfying this property that yields a near-optimal surrogate binning mechanism. We do so in \Cref{SEC:BINAPX}.

\subsection{Surrogate-Binning Mechanisms}
\label{SEC:BIN}

We now discuss our intermediate family of mechanisms, surrogate-binning mechanisms. Recall that surrogate-ranking mechanisms assign agents surrogate values from a fixed set based on their bid's rank among run-time samples, and allocates the feasible set with the highest total surrogate value. Surrogate-binning mechanisms perform the same procedure, but assign surrogate values based on a coarsening of each distribution's quantile space, rather than ranking among samples. Formally:

\begin{definition}
\label{d:SRBA}
A {\em surrogate binning mechanism} (SBM) is parameterized by $n\num$
{\em surrogate values} $\Surs$, with $\Suri =
\{\suri^1\geq\ldots\geq\suri^{\num}\}$ for each agent $i$.  The
input to the mechanism is a profile of quantiles for each agent.
\begin{enumerate}[1.]
\item A surrogate value is
calculated for each agent $i$ as $\suri=\suri^{j}$, where $\quanti\in ((j-1)/\num,j/\num]$.
\item For space $\feasibles$ of feasible allocations, the algorithm allocates
to maximize the {\em surrogate surplus} $\argmax_{\allocs \in \feasibles}
\sum_i \suri\,\alloci$.
\item Charge truthful payments.
\end{enumerate}
\end{definition}

We refer to the interval $((j-1)/\num,j/\num]$ in agent $i$'s quantile space as the \emph{$j$th bin} for that agent. SRMs and SBMs are similar in their use of surrogate values, but differ in that SRMs use ranking to discriminate between high- and low-valued agents, while SBMs compare agents' bins. In this section, we show that as long as neither mechanism discriminates at the top or bottom of the value distribution, i.e. the highest surrogate values for each agent are the same and the lowest surrogate values for each agent are the same, then binning and ranking discriminate comparably well. Formally:

\begin{theorem}
\label{thm:bvr}
For surrogate values
$\suri^1\geq \suri^2\geq\ldots\geq \suri^T$ with
$\suri^1=\suri^2=\ldots=\suri^{\underline k}$ and
$\suri^{\overline k}=\suri^{\overline k+1}=\ldots=\suri^T$ for each
population $i$, the SRM for $\Surs$ with all-pay, winner-pays-bid, or truthful payments in stationary equilibrium attains a
$(1-O(\underline k^{-1/2}))$-fraction of the welfare of
the SBM for $\Surs$. If distributions are regular, then the
SRM attains a $(1-O(\min(\underline k,T-\overline k)^{-1/2}))$-fraction of the SBM's virtual
surplus as well.
\end{theorem}

To derive \Cref{thm:bvr}, first prove in \Cref{SEC:RVP} that the simplest ranking mechanism, a $k$-unit, highest-bids-win auction, approximates the performance of posted pricing, which can be seen as the simplest binning mechanism. In \Cref{SEC:RVB}, we then show that SRMs and SBMs appear as distributions over these simpler single-agent mechanisms, which implies \Cref{thm:bvr}.

\subsubsection{Ranking Versus Pricing}
\label{SEC:RVP}

In this section, we establish a lemma relating the performance of simple mechanisms which allocate agents based on their rank among samples and those that allocate based on their quantiles. This result will serve as a technical building block for comparing surrogate ranking and surrogate binning mechanisms.

Consider the following two methods for selling an unlimited supply of items to $\num$ identically distributed agents. First, the seller could impose an {\em ex post} supply limit of $k$ units, and sell them to the $k$ highest-valued agents. Second, the seller could instead relax to a $k$-unit supply limit {\em in expectation} by selling to all agents with quantile below $k/\num$ (i.e. posting a price of $\val(k/\num)$). Note that each approach can be implemented as a surrogate-ranking or surrogate-binning mechanism, respectively. Formally:

\begin{definition}
The {\em $k/\num$-price posting mechanism} allocates agents if and only if their quantile is below $k/\num$, for some integer $k$. This can be achieved by posting the price with quantile $k/\num$ in an environment with unlimited supply.
\end{definition}

\begin{definition}
The {\em top $k$-of-$\num$ mechanism} for $\num$ agents ranks agents by value and allocates the $k$ agents with the highest values. It charges winners the $k+1$st highest value.
\end{definition}

As the law of large numbers might suggest, these two mechanisms
perform comparably for large $\num$ when $k$ is bounded away from the
extremes ($1$ and $\num-1$).

\begin{lemma}
	\label{lem:rankvprice}
	\label{LEM:RANKVPRICE}
	The top $k$-of-$\num$ mechanism attains at least a $(1-O(k^{-1/2}))$-fraction of the welfare of the $k/\num$-price posting algorithm with $\num$ i.i.d. agents. If values are regularly distributed, then it attains at least a $(1-O(\min(k,T-k)^{-1/2}))$-fraction of the revenue of the $k/\num$-price posting mechanism.
\end{lemma}

A proof can be found in \Cref{APP:RVP}. The result follows from an explicit characterization of the distributions for which the ratio in performance between the two mechanisms is largest. Hence, the bounds of the lemma are tight.

One immediate generalization of \Cref{LEM:RANKVPRICE} is to distributions over pricing mechanisms and analogous distributions of top-$k$ mechanisms.

\begin{lemma}
\label{lem:distprice}
Consider a distribution over $k/\num$-price posting mechanisms for $\num$ i.i.d. agents, where the highest price is at quantile $\underline k/\num$ and the lowest price is at quantile $\overline k/\num$. The same distribution over corresponding top-$k$-of-$\num$ mechanisms attains a $(1-O({\underline k}^{-1/2}))$-fraction of the welfare of the distribution over price-posting mechanisms. If values are regularly distributed, then the distribution over top-$k$-of-$\num$ algorithms attains a $(1-O(\min(\underline k,T-\overline k)^{-1/2}))$-fraction of the price-posting revenue as well.
\end{lemma}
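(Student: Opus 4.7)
The plan is to reduce the statement to a pointwise application of Lemma~\ref{lem:rankvprice} together with a monotonicity observation on the ratio functions $\rho$ and $\eta$. Let $D$ denote the given distribution over $k/n$-price posting algorithms, supported on $\{\underline k, \underline k + 1, \ldots, \overline k\}$. Write $W_{k/n}$ (resp.\@ $R_{k/n}$) for the welfare (resp.\@ revenue) of the $k/n$-price posting algorithm on $n$ agents, and $W^{\text{top}}_k$, $R^{\text{top}}_k$ for the analogous quantities of the top-$k$-of-$n$ algorithm. The expected welfare of the distribution over price-posting algorithms is then $\mathbb E_{k\sim D}[W_{k/n}]$, and similarly for the top-$k$ variant and for revenue.

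First I would apply Lemma~\ref{lem:rankvprice} pointwise: for each $k$ in the support of $D$, $W^{\text{top}}_k \geq \rho(k,n)\, W_{k/n}$, and if $F$ is regular, $R^{\text{top}}_k \geq \eta(\min(k,n-k), n)\, R_{k/n}$. The key step is then the elementary monotonicity fact that $\rho(k,n)$ is increasing in $k(n-k)$ (since the correction $\sqrt{n/(2\pi k(n-k))}$ is decreasing in $k(n-k)$), and hence decreasing as $k$ moves toward either extreme of $\{1,\dots,n-1\}$. Over the interval $[\underline k, \overline k]$ the product $k(n-k)$ is minimized at whichever endpoint lies closer to the boundary — that is, at $k^\star := \min(\underline k,\, n-\overline k)$ — and so $\rho(k,n) \geq \rho(k^\star, n)$ for every $k$ in the support. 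The analogous monotonicity for $\eta(\min(k,n-k), n)$ is immediate: $\min(k,n-k)$ is minimized over $[\underline k, \overline k]$ at the same $k^\star$, and $\eta(\cdot, n)$ is increasing in its first argument.

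With these two ingredients in hand, the result follows by taking expectations:
\begin{equation*}
\mathbb E_{k\sim D}\bigl[W^{\text{top}}_k\bigr]
\;\geq\; \mathbb E_{k\sim D}\bigl[\rho(k,n)\, W_{k/n}\bigr]
\;\geq\; \rho(k^\star, n)\cdot \mathbb E_{k\sim D}[W_{k/n}],
\end{equation*}
and symmetrically for revenue using $\eta(k^\star, n)$ when $F$ is regular. Substituting $k^\star = \min(\underline k, n-\overline k)$ gives exactly the bounds claimed.

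I do not anticipate any significant obstacle beyond stating the monotonicity carefully: the heavy lifting has already been done in Lemma~\ref{lem:rankvprice}, and averaging preserves a pointwise lower bound. The only subtlety worth flagging is that the revenue bound uses the regularity of $F$ only through its appeal to Lemma~\ref{lem:rankvprice} (which itself relies on it), so no new regularity-based estimate is needed in this step.
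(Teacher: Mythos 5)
Your proof matches the paper's approach: apply Lemma~\ref{lem:rankvprice} pointwise to each $k$ in the support and then bound the expectation by the worst pointwise ratio. The $\rho$ argument is airtight. For the revenue case, however, be a bit careful: the function $k \mapsto \eta(k,n) \approx 1 - k^{-1/2}\bigl(n/(n-k)\bigr)^{3/2}$ is \emph{not} globally increasing on $[1,n/2]$ --- the loss term $k^{-1/2}(n-k)^{-3/2}$ has an interior minimum at $k=n/4$ and grows again toward $k=n/2$ --- so the flat assertion that ``$\eta(\cdot,n)$ is increasing in its first argument'' is an overstatement. The paper's own one-line proof (``worst for very low and very high $k$'') has the same slack. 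The conclusion survives in the regime where the lemma is actually invoked (when $\min(\underline k, n-\overline k)$ is small relative to $n$, in which case $\eta(k^\star,n)$ lies below $\eta(n/2,n)$ and hence below $\eta(\min(k,n-k),n)$ for every $k$ in the support), but as you have written it the monotonicity step for $\eta$ needs that extra caveat.
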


\begin{proof}
Lemma~\ref{lem:rankvprice} implies that for each price in the distribution of the price-posting mechanism, there is a top-$k$-of-$\num$ algorithm which approximates it and which appears with the same probability. The approximation ratio of a distribution over pairwise approximations is at least the approximation from the worst pair. 
\end{proof}

\subsubsection{Ranking Versus Binning}
\label{SEC:RVB}

Lemma~\ref{lem:distprice} shows that in simple settings, ranking
discriminates almost as effectively as pricing. We now extend this idea to surrogate-ranking and surrogate-binning mechanisms.

Any allocation rule can be viewed by agents as a distribution over posted prices. For surrogate binning mechanisms, this distribution is especially easy to analyze. Because  surrogate-binning mechanisms treat agents solely based on the bin into which their quantile falls, and because bins are distributed evenly in quantile space, it follows that for surrogate-binning mechanisms, the allocation rule is piecewise-constant, with break points occurring at multiples of $1/\num$. It follows that the corresponding distribution over posted prices can actually be viewed as a distribution over $k/\num$-quantile price posting mechanisms. Moreover, since quantiles (and agents' bins) are distributed uniformly, the probability of allocation associated with a particular surrogate value is exactly that surrogate value's characteristic weight. The probability of the price with quantile $j/\num$ being offered is therefore the marginal characteristic weight, $w_i^j-w_i^{j+1}$.

Meanwhile, a standard fact from the study of position auctions is that any rank-based position auction can be represented as a convex combination of $k$-unit auctions. Since the allocation rule agents face in the stationary equilibrium of a surrogate-ranking mechanism is identical to that of a rank-based position auction, a similar analysis applies. We summarize the above discussion with the following lemma:

\begin{lemma}
\label{lem:bin}
Any surrogate-binning (resp. surrogate-ranking) mechanism with surrogate values $\{\psi_i^j\}^{j=1,\ldots,T}_{i=1,\ldots,n}$ appears to each agent $i$ as a distribution over price-posting (resp. top-$k$) mechanisms. The probability of offering the price with quantile $\tfrac{j}{T}$ (resp. of allocating $j$ units) is given by $w_i^{j}-w_i^{j+1}$, where $w_i^0 = 1$, $w_i^{T+1}=0$, and $w_i^j$ is the characteristic weight for $\psi_i^j$ for $j=1,\ldots, T$.
\end{lemma}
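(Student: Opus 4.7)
The plan is to handle the two cases (binning and ranking) in parallel, since both reduce to the same telescoping identity. The key observation is that Lemmas~\ref{lem:uniformity} and~\ref{lem:characteristics} already pin down the allocation probability that agent~$i$ receives as a function of which surrogate $\psi_i^j$ is assigned to them: agents bid symmetrically in equilibrium, so each population's selection rule induces uniformity, and the resulting allocation probability when assigned $\psi_i^j$ is exactly the characteristic weight $w_i^j$. What remains is to identify the ``position'' of the agent (quantile bin or rank) that determines their surrogate, and then to express the resulting allocation rule as a mixture of elementary mechanisms via a telescoping sum.

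For the surrogate-binning case, the binning selection rule (Definition~\ref{def:binning}) partitions bid space into $T$ equal-probability intervals under the symmetric equilibrium bid distribution, and within each population the equilibrium strategy is monotone (this is the unique equilibrium of the i.i.d.\@ position auction from Theorem~\ref{thm:positionuniqueness}). Thus an agent receives surrogate $\psi_i^j$ precisely when their value quantile lies in $[(j{-}1)/T,\, j/T]$, and by the preceding paragraph this event yields allocation probability $w_i^j$. Now define a mixture over price-posting algorithms that offers the price at quantile $j/T$ with probability $\pi_i^j := w_i^j - w_i^{j+1}$ for $j = 0,1,\ldots,T$, with the conventions $w_i^0 = 1$ and $w_i^{T+1} = 0$. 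This is a valid distribution since $\sum_{j=0}^T \pi_i^j = w_i^0 - w_i^{T+1} = 1$ telescopes. An agent whose value quantile lies in bin $k$ is served by the price at quantile $j/T$ if and only if $j \geq k$, so their total allocation probability under the mixture is $\sum_{j=k}^T \pi_i^j = w_i^k$, matching the surrogate-binning allocation.

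The surrogate-ranking case is entirely analogous. Here the agent receives surrogate $\psi_i^r$ whenever their bid ranks $r$-th among the $T$ bids from population $i$; by monotonicity of the symmetric equilibrium strategy within the population (again via Theorem~\ref{thm:positionuniqueness}), this is equivalent to ranking $r$-th by value. Define a mixture over top-$k$-of-$T$ algorithms that selects top-$k$ with probability $\pi_i^k := w_i^k - w_i^{k+1}$. An agent with value-rank $r$ is served by the top-$k$ algorithm if and only if $k \geq r$, so their allocation probability under the mixture is $\sum_{k=r}^T \pi_i^k = w_i^r$ by the same telescoping identity, matching the surrogate-ranking allocation.

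The main obstacle, to the extent there is one, is justifying that the bid-based surrogate selection really does correspond to a partition of value quantiles (in the binning case) or to a ranking by values (in the ranking case). Both facts follow from the monotonicity of the symmetric equilibrium strategy within each population, which in turn comes from Theorem~\ref{thm:positionuniqueness}. Once this is in hand the argument is purely combinatorial and the telescoping identity does all the work.
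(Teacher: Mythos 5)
Your proof is correct and matches the paper's (implicit) argument: the paper proves Lemma~\ref{lem:bin} in the prose immediately before it, observing that the interim allocation rule of a binning or ranking algorithm is a step function on quantiles (or ranks) taking value $w_i^j$ on the $j$-th piece, which you then decompose as a mixture of indicator step functions via the telescoping identity $\sum_{j\geq k}(w_i^j - w_i^{j+1}) = w_i^k$. The one redundancy in your write-up is the detour through Theorem~\ref{thm:positionuniqueness} and equilibrium monotonicity: in the context in which the lemma is applied (Theorem~\ref{thm:bvr}), both the surrogate-binning and surrogate-ranking algorithms are being analyzed on the agents' \emph{true values}, as justified by Theorem~\ref{thm:srm-revelation}, so the surrogate is directly determined by the value's quantile bin or value rank and no bid-to-value translation is needed; but your detour is harmless and the telescoping argument, which is the heart of the lemma, is exactly right.
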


Notice that if
$\psi_i^1=\ldots=\psi_i^{\underline k}$ for some positive integer $\underline k$, the binning mechanism's
allocation rule on the first $\underline k$ intervals of distribution
$i$'s quantile space will be constant. If $\psi_i^{\overline k}=\ldots=\psi_i^T$ for some positive integer $\overline k$, then the allocation rule on the last $T-\overline k$ intervals will be constant. In terms of distribution $i$'s
randomization over posted pricings, the highest nontrivial price
offered has quantile $\underline k/\num$, and the lowest has quantile
$\overline k/\num$. These extremal quantiles drive the approximation
guarantees relating pricing to ranking. Consequently, we obtain the following theorem:

\begin{theorem}
\label{thm:bvr}
For surrogate values
$\psi_i^1\geq \psi_i^2\geq\ldots\geq \psi_i^T$ with
$\psi_i^1=\psi_i^2=\ldots=\psi_i^{\underline k}$ and
$\psi_i^{\overline k}=\psi_i^{\overline k+1}=\ldots=\psi_i^T$ for each
population $i$, the surrogate ranking mechanism attains a
$(1-O(\underline k^{-1/2}))$-fraction of the welfare of
the binning mechanism. If distributions are regular, then the
surrogate ranking mechansim attains a $(1-O(\min(\underline k,T-\overline k)^{-1/2}))$-fraction of the binning mechanism's virtual
surplus.
\end{theorem}


\subsection{Approximately Optimal Binning Mechanisms}
\label{SEC:BINAPX}

Surrogate binning mechanisms are parametrized by their surrogate values. In this section, we show how to choose a set of surrogate values that yields an SBM which (1) attains near-optimal welfare or revenue, and (2) permits approximation by a surrogate ranking mechanism via \Cref{thm:bvr}. More precisely:

\begin{theorem}
\label{THM:BIN}
For every $k\in\{1,\ldots, \num/2\},$ there exists a choice of surrogate values which yields a surrogate binning mechanism such that:
\begin{itemize}
    \item The welfare of the corresponding surrogate binning mechanism is at least a $(1-O(1/k))(1-O(\numag k/\num))$-fraction of the optimal welfare.
    \item For each agent, the highest $k$ surrogate values are identical and the lowest $k$ surrogate values are identical.
\end{itemize}
If value distributions are regular, an identical result holds for the objective of revenue.
\end{theorem}

Note that combining \Cref{thm:bvr} with \Cref{THM:BIN} and choosing $k=(n/\num)^{2/3}$ immediately implies the representation error bound of \Cref{THM:WELARE}. We devote this section to deriving surrogate values which prove \Cref{THM:BIN}. The high-level approach will be resampling. To illustrate the idea, consider mapping the $j$th surrogate of
agent $i$, which corresponds to an agent with quantile in $[\tfrac{j-1}{T},\tfrac{j}{T}]$
 to a redrawn value from their value distribution conditioned to this interval.  Such resampling does not change the induced
allocation rule for any other agents, and replaces the allocation rule for agent $i$
on their $j$th quantile interval with its average.

This na\"ive resampling procedure does not directly lead to an approximately optimal mechanism. For an example where this fails, note that for the top quantile interval, the optimal mechanism's allocation
probability at the very top of the interval may be much higher than its
average across the interval, while the highest values on the interval may
be much higher than the interval's average.  For example, if the value
and allocation rule are both $1$ for an $\epsilon$ measure and zero
otherwise, then the optimal welfare is $\epsilon$ and the welfare
from resampling is $\epsilon^2$.  A second issue is that we wish to be
able to apply \Cref{thm:bvr} to get a good approximation bound, and therefore need the $k$ highest and $k$ lowest surrogate values to each be the same for each agent. Resampling na\"ively for every bin does not achieve this objective.

To resolve both these issues, we will artificially inflate the values of agents with low quantiles, treating them as if they had the highest value in the support of their distributions, and deflate the values of high quantiles, treating these agents as if they had the lowest value in the support of their distributions. We formalize this approach in \Cref{SEC:BUFFER} and \Cref{SEC:RESAMPLE}. \Cref{SEC:BUFFER} first treats the process of exaggerating extreme quantiles in isolation, analyzing a procedure that can be applied to any mechanism. In \Cref{SEC:RESAMPLE}, we combine this with the resampling procedure discussed above to produce a surrogate binning mechanism satisfying the conditions of \Cref{THM:BIN}.

\subsubsection{Extremal Buffering}
\label{SEC:BUFFER}

In this section, we design and analyze a procedure for privileging agents with low quantiles and penalizing agents with high quantiles.
 We will show that our procedure does not significantly harm welfare or revenue, and in the next section, we will use this analysis as a foundation for proving \Cref{THM:BIN}. Our procedure will take as input an arbitrary truthful mechanism (e.g. a revenue-optimal mechanism), and modify its allocation rule. We will compare the virtual surplus of the new rule to the original mechanism. In the next section, we will combine this with a resampling procedure to obtain a surrogate binning mechanism.
 
Formally, consider an arbitrary truthful mechanism with allocation rule $\allocs=(\alloci[1],\ldots,\alloci[n])$. We will define a new allocation rule by remapping agents' quantiles to exaggerate extreme quantiles before applying $\allocs$. In particular, we consider the following transformation of $\allocs$:

\begin{definition}
	\label{def:hedge}
	Given a monotone allocation rule $\allocs$ and a quantile $q\in[0,1]$, the \emph{$q$-buffering rule for
	$\allocs$} runs $\allocs$ on agents with quantiles
	transformed for each agent as follows:
	\begin{itemize}
		\item For any $q_i\in[0,q]$, return 0.
		\item For any $q_i\in[q,1-q]$, return $(q_i-q)/(1-2q)$.
		\item For any $q_i\in [1-q,1]$, return $1$.
	\end{itemize}
\end{definition}

The $q$-buffering rule for $\allocs$ treats agents at the top of the distribution as if they had quantile $0$ and agents at the bottom as if they had quantile $1$. Moreover, for any agent $i$, conditioned on $\quanti\in[q,1-q]$, the distribution of remapped quantiles for $i$ is uniform. We show that the $q$-buffering procedure approximately preserves both welfare and revenue, as well as any other virtual surplus quantity that satisfies two basic properties satisfied by values or regular virtual values.
We will only consider $q$-buffering as an analysis tool, so it will suffice to analyze the virtual surplus generated by the resulting allocation rules independent of incentives. Note that the $q$-buffering procedure preserves monotonicity, so truthful payments could be found if desired. 
Formally: 

\begin{lemma}
	\label{lem:toppromotion}
	\label{LEM:TOPPROMOTION}
	For each agent $i$, let $\phi_i:[0,1]\rightarrow \mathbb R$ be an arbitrary nonincreasing virtual value function satisfying $\int_0^1 \phi_i(q)\,dq\geq 0$. The $q$-buffering rule for $\allocs$ attains at least a $(1-\frac{q}{1-q})(1-q)(1-2(n-1)q)$-fraction
	of the virtual surplus of $\allocs$.
\end{lemma}

The proof of Lemma~\ref{LEM:TOPPROMOTION} can be found in \Cref{APP:BUFFER}. Note that choosing $q=k/\num$ for some $k\in\{1,\ldots, \num/2\}$ yields an approximation factor of $(1-O(1/k))(1-O(\numag k/\num))$. The proof of the lemma follows from several observations about the similarity of $\allocs$ and the $q$-buffering rule for $\allocs$. One particular step in this analysis may be of
independent interest.  Recall the characterization of expected revenue
in terms of revenue curves and marginal revenue from \Cref{lem:vvals}:
\begin{align}
\label{eq:virtualsurplus}
\sexpect[\quant]{\price(\quant)} 
&= 
\sexpect[\quant]{-\alloc'(\quant)\,\rev(\quant)} + \rev(1)\,\alloc(1)
= 
\sexpect[\quant]{\marg(\quant)\,\alloc(\quant)} + \rev(0)\,\alloc(0).
\end{align} 
The first equality enables a geometric understanding of revenue.
Given an fixed allocation rule $\alloc$ in quantile
space, for two value functions $\vali[1]$ and $\vali[2]$ where
$\revi(\quant) = \quant\,\vali(\quant)$ satisfies $\revi[1](\quant)
\geq \revi[2](\quant)$, then the revenue from $\vali[1]$ on $\alloc$
is at least the revenue of $\vali[2]$ on $\alloc$.  This follows from
the first equality of equation \eqref{eq:virtualsurplus}, where the
expressions for revenue of both value functions are weighted integrals
over $\quant\in[0,1]$ with non-negative weights $-\alloci'(\quant)$.
Approximation bounds hold as well; specifically, if $\revi[2]$
approximates $\revi[1]$ at all $\quant\in[0,1]$ then the same
approximation holds for the revenue of any fixed allocation rule
$\alloc$. Note however that a similar result, with a fixed distribution and similar allocation rules is not implied by the second equation, as
the weights $\marg(\quant)$ are not generally all the same sign.  Instead, the
following lemma 
shows that two allocation rules with \emph{inverses} that are
approximately close have approximately the same revenue.  We state the
result for general virtual value functions $\virt(\cdot)$ and
cumulative virtual value curves $\cumvirt(\quant) = \int_0^\quant
\virt(r)\,dr$.  The assumption in the lemma on the cumulative virtual value
curve is that lines from the origin pass from below to above the
curve.  This assumption, for example, is satisfied by any revenue
curve, and it does not require regularity.

\begin{lemma}
\label{lem:apx}
For virtual value function $\virt(\cdot)$ and cumulative virtual value
$\cumvirt(q)=\int_0^{\quant} \virt(r)\,dr$ satisfying $\cumvirt(\alpha
\,\quant)\geq \alpha\, \cumvirt(\quant)$ for all quantiles $\quant$
and $\alpha \in [0,1]$, and any two allocation rules $\alloc_1$ and
$\alloc_2$ that satisfy $\alloc_1^{-1}(z)\geq \alloc_2^{-1}(z)\geq
\tfrac{1}{\alpha}\alloc_1^{-1}(z)$, the expected virtual surpluses satisfy
	\begin{align*}
	\expect[\quant]{\virt(\quant)\,\alloc_2(\quant)} + \cumvirt(0)\,\alloc_2(0)
   &\geq \tfrac{1}{\alpha} [\expect[\quant]{\virt(\quant)\,\alloc_1(\quant)} + \cumvirt(0)\,\alloc_1(0)]. 
	\end{align*} 
\end{lemma}

\begin{proof}
The virtual surplus of any allocation rule $\alloc$ can be rewritten
as $\int_0^1\virt(\quant)\,\alloc(q)\,dq + \cumvirt(0)\,\alloc(0) =
\int_0^1 \cumvirt(\alloc^{-1}(z))\,dz$. This follows by the first
equality of equation \eqref{eq:virtualsurplus} and a change of
variables to integrate the vertical axis rather than the horizontal
axis as follows:
\begin{align}
\notag
\int_0^1 {-\alloc'(\quant)\,\cumvirt(\quant) \, d\quant} + \cumvirt(1)\,\alloc(1) 
	&= \int_{\alloc(1)}^{\alloc(0)} \cumvirt(x^{-1}(z))\, dz + \int_{0}^{\alloc(1)} \cumvirt(1)\, dz\\
\label{eq:inverse}
	&= \int_{0}^{\alloc(0)} \cumvirt(\alloc^{-1}(z))\, dz.
\end{align}
Notice that the second line follows from the first line because $\alloc^{-1}(z) = 1$ for $z \in [0,\alloc(1)]$.
	
Now consider two arbitrary quantiles $\quant_1$ and $\quant_2$
satisfying $\tfrac{1}{\alpha}\quant_1\leq \quant_2\leq \quant_1$. By
assumption, we have $\cumvirt(\quant_2)\geq
\quant_2\cumvirt(\quant_1)/\quant_1\geq
\tfrac{1}{\alpha}\cumvirt(\quant_1)$. The assumption on the
approximation of the two allocation rules, namely $\alloc^{-1}(z)\geq
\ealloc^{-1}(z)\geq \tfrac{1}{\alpha}\alloc^{-1}(z)$ for all
$z\in[0,1]$, and the expected virtual surplus written as rewritten in
equation~\eqref{eq:inverse} both both $\alloc_1$ and $\alloc_2$, then,
suffice to prove the lemma.
\end{proof} 

\subsubsection{Approximately Optimal Reasmpling}
\label{SEC:RESAMPLE}

In the previous section, we analyzed the effect of exaggerating extreme quantiles. We now introduce an additional resampling transformation, and show that the performance loss is not significant when combined with the buffering procedure. Furthermore, the mechanism produced will be a surrogate binning mechanism. This will imply Theorem~\ref{THM:BIN}. 

As we did in the previous section, we will argue for an arbitrary monotone allocation rule $\allocs.$ Let $k\in\{1,\ldots,\num/2\}$ be given, and let $\hat \allocs$ denote the  $k/\num$-buffering rule for $\allocs$. Our resampling procedure is defined as follows:

\begin{definition}
    Given an allocation rule $\allocs$, the {\em resampling rule for $\allocs$} allocates according to the following randomized procedure:
\begin{itemize}
	\item For each agent $i$:
    \begin{itemize}
    	\item Compute $j$ such that $i$'s quantile $q_i\in[(j-1)/T,j/T]$.
        \item Resample a quantile uniformly from $[(j-1)/T,j/T]$.
 	\end{itemize}
 
  \item Run $\allocs$ on the resampled quantiles.
\end{itemize}
\end{definition}

This algorithm formalizes the procedure discussed at the beginning of \Cref{SEC:BINAPX}. Because of the problems discussed earlier, the resampling procedure alone does not necessarily preserve the performance of the original allocation rule. However, composing the resampling procedure with the buffering procedure of the previous section eliminates these pathologies, yielding:

\begin{lemma}\label{lem:resampling}
		For each agent $i$, let $\phi_i(q_i)$ be a nonincreasing virtual value function, and let $\mathbf x$ be an a monotone allocation algorithm. Further, for some $k\leq T/2$, let $\mathbf{\hat x}$ denote the $k/T$-buffering algorithm for $\mathbf x$. Then the resampling algorithm for $\mathbf{\hat x}$ obtains at least a $\frac{k}{k+1}(1-\frac{q}{1-q})(1-q)(1-2(n-1)q)$ the expected virtual surplus under $\mathbf x$, with $q=k/T$.
\end{lemma}

A proof can be found in \Cref{APP:RESAMPLE}. The main observations are that resampling the buffered allocation rule does not change the allocation of agents with extreme quantiles, and that the nonincreasing nature of $\virti$ implies that the loss from resampling moderate quantiles is small.

\begin{proof}[Proof of \Cref{THM:BIN}]
We argue for the objective of revenue. The argument for welfare is essentially identical. Let $\allocs$ be the allocation rule of the revenue-optimal mechanism, let $\hat{\allocs}$ be the $k/\num$-buffering rule based on $\allocs$. We first argue that the resampling rule for $\hat{\allocs}$ can be implemented as a surrogate binning mechanism with random surrogate values. In particular, the $j$th surrogate value for agent $i$, $\suri^j$, is produced in in the following way:
\begin{itemize}
    \item If $j\in \{1,\ldots, k\}$, set $\suri^j=\revi'(0)$
    \item If $j\in \{\num-k+1,\ldots, \num\}$, set $\suri^j=\revi'(1)$
    \item Otherwise:
    \begin{itemize}
        \item Resample $\quanti'$ uniformly from $((j-1)/(\num-2k),j/(\num-2k)]$.
        \item Set $\suri^j=\revi'(\quanti')$.
    \end{itemize}
\end{itemize}

These surrogate values satisfy the properties required for \Cref{THM:BIN}, but are randomized. Note that our performance guarantees are in expectation over both the random choice of surrogate values and the quantiles of agents. This implies the existence of a deterministic choice of surrogate values for which the performance guarantees hold in expectation over the quantiles of agents. The result follows. 
\end{proof}

	\section{Reduction from Sample Complexity to Rank-Based Inference}
\label{sec:inference}

We have shown that surrogate-ranking mechanisms possess a unique
stationary equilibrium (\Cref{thm:sruniqueness}), and that this
equilibrium may be analyzed as if it was truthful
(\Cref{thm:srm-revelation}). In this section, we show how to use bid
data to design a surrogate-ranking mechanism with near-optimal welfare
or revenue in stationary equilibrium.  Specifically, we reduce this design
problem to an inference problem which is better-understood: estimating
expected order statistics from bid data in i.i.d.\ position auctions.
This inference problem was solved by \citet{CHN16} for first-price
and all-pay mechanisms and is straighforward for truthful mechanisms.

Before giving details, we describe the high-level approach.  Recall
from \Cref{sec:BAR} that, for revenue with regular distributions or
welfare with general distributions, polynomially many surrogate values
$\num$ per agent suffice to obtain a $(1-\epsilon)$-fraction of the
optimal revenue or welfare via a surrogate-ranking mechanism.
Consider a surrogate-ranking mechanism with $T$ surrogate values per
agent. We first show in \Cref{sec:optimal} that the revenue- or
welfare-optimal choice of these $n\num$ surrogate values requires only
knowledge of the expected order statistics of the value or virtual value distribution. In
\Cref{sec:prop} we observe that this design approach is robust to error
from inference: if one uses imperfect estimates of expected order
statistics to design a SRM, then estimation error will propagate
cleanly to revenue or welfare loss.  Composing these three observations
above yields the desired reduction.


\Cref{sec:instantiation} concludes by instantiating the reduction with
an estimator for the requisite order statistics.  Specifically,
\citet{CHN16} show how to estimate expected order statistics using bid
data from all-pay and winner-pays-bid position auctions with bounded
distributions, and we show in \Cref{APP:SAMPLES} how to estimate
expected order statistics with truthful data from unbounded regular
distributions. These results imply polynomial sample complexity for
winner-pays-bid, all-pay, and truthful mechanism design.

\subsection{Optimal Surrogate-Ranking Mechanisms}
\label{sec:optimal}

Surrogate-ranking mechanisms are parametrized by $n\num$ surrogate
values. Each choice of surrogate values induces a different allocation
rule in stationary equilibrium, but by \Cref{thm:srm-revelation}, this
equilibrium allocation rule is the same under any of the standard
payment formats.  Hence the optimal surrogate values, by revenue equivalence, do not depend on the payment format and we may as well identify the optimal surrogate values for the truthful payment format. In this section, we characterize the
welfare- and revenue-optimal choices of surrogate values. To choose
our surrogate values optimally, we consider a relaxed problem of
maximizing a generic virtual surplus quantity subject to the
constraint that the allocation rule depend only on each agent's rank
among $\num-1$ other truthful run-time samples.  The solution to this
problem is straightforward given the  observation that if the
only information we have to make decisions on is the rank of an agent
against samples from her value distribution, then decisions should be made
based on expected order statistics.  The right choice of surrogate
values is the expected order statistics of the quantity of interest
for the objective, i.e., for welfare maximization, it's order statistics
for the value distribution for revenue maximization it's order
statistics for the distribution of marginal revenues (via the
characterization of expected revenue in \Cref{lem:vvals}). Formally, given $\num-1$ sampled quantiles for each agent, let $r_i$ denote the rank of the quantile $q_i$ of agent $i$ among these samples. We have:

\begin{theorem}
\label{thm:optimalSRM}
\label{THM:OPTIMALSRM}
The welfare-optimal surrogate-ranking mechanism uses surrogate values given by
$\suri^j=\expect[\quanti]{\vali(\quanti) \given \ranki=j}$. For
regular distributions, the revenue-optimal surrogate-ranking mechanism
uses surrogate values $\suri^j=\expect[\quanti]{\margi(\quanti) \given
  \ranki=j}$ where $\margi(\quanti) = \vali(\quanti) +
\quanti\,\vali'(\quanti)$ is the marginal revenue of agent $i$ at
quantile $\quanti$.
\end{theorem}

A formal
proof of this theorem is in \Cref{app:optimal-srm}. Because the welfare- and revenue-optimal surrogate ranking mechanisms are at least as good as any other surrogate ranking mechanism, it follows that they inherit the welfare and revenue guarantees of any other such mechanism. In particular, we obtain the following corollary to \Cref{thm:welfare}:
\begin{corollary}
The welfare- (resp.\ revenue-) optimal surrogate ranking mechanism
obtains a $(1- O(\sqrt[3]{n/T}))$-fraction of the optimal
welfare (resp.\ a $(1- O(\sqrt[3]{n/T}))$-fraction of the
optimal revenue with regular distributions) in stationary equilibrium.
\end{corollary}
\subsection{Propagation of Error}
\label{sec:prop}

The optimal surrogate ranking mechanism for welfare (resp. revenue)
uses surrogate values equal to the expected order statistics of each
agent's value (resp. virtual value) distribution. We now show that a
designer can in fact use noisy estimates of these quantities, and the
performance will degrade smoothly with the estimation error. As before, we
present our result for an arbitrary monotonic virtual value function
$\virti$ for each agent.

For monotonic $\virti$, though the expected order statistics are
monotone, estimates of these expected order statistics may not be.  However, if
the estimates of agent $i$'s expected order statistics are within an agent-specific error $\errori$ of being correct, then any natural method or making these estimates monotone will be similarly close, e.g., using the $j$th
estimate of $\max_{j' \leq j} \esuri^j$ instead of $\esuri^j$.  Of
course, the recommended method is the standard approach of ironing,
which for quantities like order statistics is formally described in
\citet{DHY15} and \citet{CHN16}.  In fact ironing is equivalent in
this setting to isotonic regression.  An advantage of ironing is that
it is the correct approach when the original $\virti$ is non-monotonic
and, omitting the details and consequences, our results for revenue
can be extended to {\em tail regular} distributions using this
approach, cf.\ \citet{DHY15}.  For the remainder of the discussion,
without loss of generality, we assume that the error estimates are
monotonic.

The following theorem shows that errors in estimated order statistics
propagate in a well-behaved fashion in surrogate-ranking mechanisms.

\begin{theorem}
\label{thm:propagation}
For all $i$ and $j$, let $\suri^j$ be the expected $j$th order
statistic of agent $i$'s virtual value distribution, and let
$\esuri^j$ be an estimate of $\suri^j$ satisfying
$|\esuri^j-\suri^j|<\errori$, where $\errori$ is an agent-specific
error bound.   The difference between the expected virtual
surplus of the surrogate ranking mechanisms with the true expected
order statistics and estimated order statistics is at most
$2\sum_i\errori$.
%
\end{theorem}
\begin{proof}
Let $\allocs$ and $\eallocs$ denote the allocation rule of the
surrogate-ranking mechanism as a function of agents' ranks $\ranks$
among their run-time samples with optimal surrogate values $\Surs$ and
estimated and ironed surrogate values $\eSurs$, respectively.  The
theorem follows from:
\begin{align*}
\expect[\ranks]{\sum\nolimits_i\texpect[\quanti]{\suri(\quanti)\,|\,\ranki}\,\ealloci(\ranks)}
& = \expect[\ranks]{\sum\nolimits_i\suri^{\ranki} \ealloci(\ranks)}\\
&\geq \expect[\ranks]{\sum\nolimits_i(\esuri^{\ranki}-\errori) \ealloci(\ranks)}\\
&\geq \expect[\ranks]{\sum\nolimits_i\esuri^{\ranki}\, \ealloci(\ranks)}-\sum\nolimits_i \errori\\
&\geq \expect[\ranks]{\sum\nolimits_i\esuri^{\ranki}\, \alloci(\ranks)}-\sum\nolimits_i \errori\\
&\geq \expect[\ranks]{\sum\nolimits_i( \suri^{\ranki} -\errori)\,\alloci(\ranks)}-\sum\nolimits_i \errori\\
&\geq \expect[\ranks]{\sum\nolimits_i\suri^{\ranki}\,\alloci(\ranks)}-2\sum\nolimits_i \errori.
\end{align*}
The second and fifth lines follow from the assumption that
$|\esuri^j-\suri^j|<\errori$, and the fourth line follows from the
fact that $\eallocs$ is the allocation rule that maximizes
$\texpect[\ranks]{\sum\nolimits_i\esuri^{\ranki}\,
  \ealloci(\ranks)}$. The last line is the expected virtual surplus of
the optimal surrogate-ranking mechanism $\allocs$, which implies the
result.
\end{proof}

\subsection{Sample Complexity of Non-Truthful Mechanisms}
\label{sec:instantiation}

We can now formalize the reduction from non-truthful sample complexity
to inference in rank-based position auctions. In
\Cref{sec:rank-based}, we observed that the data generated by
an agent in a SRM is distributed according to the unique BNE of an
i.i.d.\ position auction. In \Cref{sec:BAR}, we demonstrated the
existence of SRMs with near-optimal welfare and revenue, and in
\Cref{sec:optimal} and \Cref{sec:prop}, we showed that it was
possible to construct such a mechanism from noisy estimates of
expected order statistics. We have therefore reduced the problem designing a near-optimal non-truthful mechanism from data to the problem of estimating expected order statistics in an i.i.d.\ position auction. We now instantiate the reduction in two settings: bounded distributions with the winner-pays-bid or all-pay format, where we seek additive error bounds, and unbounded distributions with truthful payments, where we seek multiplicative error bounds.



We first consider bounded distributions with the winner-pays-bid or all-pay format. Existing literature shows how to one use bid data from SRMs to infer these parameters
efficiently. \citet{CHN16} study the problem of inferring order statistics from bid
data in all-pay and winner-pays-bid i.i.d.\ position auctions. They
show that for any non-trivial position weights, it is possible to
efficiently infer order statistics for both the values and marginal
revenues. We summarize their results below:

\begin{theorem}[\citealp{CHN16}]\label{thm:chnap}
Consider a $\num$-agent all-pay or winner-pays-bid i.i.d.\ position
auction with arbitrary position weights and values in $[0,1]$. There
exists an estimator $\hat V_k$ for the expected $k$th order statistic
of the value distribution $V_k$ such that with $N\geq
O(T^4(\log^2(1/\delta+T)+\log^2(1/\epsilon+
T))\epsilon^{-2}\delta^{-2})$ sampled bids from the unique BNE, $|\hat
V_k-V_k|\leq \epsilon$ with probability at least $1-\delta$.
\end{theorem}
\begin{theorem}[\citealp{CHN16}]\label{thm:chnfp}
Consider a $\num$-agent all-pay or winner-pays-bid i.i.d.\ position
auction with arbitrary position weights and values in $[0,1]$. There
exists an estimator $\hat \Psi_k$ for the expected $k$th order
statistic of the virtual value distribution $\Psi_k$ such that with $N\geq
O(T^4(\log^2(1/\delta)+\log^2(1/\epsilon))\epsilon^{-2}\delta^{-2})$
sampled bids from the unique BNE, $|\hat \Psi_k-\Psi_k|\leq \epsilon$
with probability at least $1-\delta$.
\end{theorem}

We note that the above results combine with the incentive analysis of \Cref{sec:rank-based}, approximation analysis of \Cref{sec:BAR}, and the analysis of error propagation in \Cref{sec:optimal} and \Cref{sec:prop} to 
imply a solution to the non-truthful sample complexity problem for
agents with values in $[0,1]$ and additive loss. We summarize below:

\begin{theorem}\label{thm:nontruthful}
For agents with bounded values in $[0,1]$, there are families of
winner-pays-bid and all-pay mechanisms that satisfy
\Cref{cond:inference}, \Cref{cond:incentives}, and
\Cref{cond:performance} with $p_{\text{run}}(n,\epsilon^{-1})= O(n^4\epsilon^{-3})$ and $p_{\text{design}}(n,\epsilon^{-1},\delta^{-1})=\tilde O(n^{28}\epsilon^{20}\delta^2)$ for additive loss and the welfare
objective. If the agents' distributions are regular and bounded, then
the same result also holds for the revenue objective.
\end{theorem}
\begin{proof}
Because values are bounded, a multiplicative error of $\epsilon$ corresponds to an additive error of at most $n\epsilon$. Hence, choosing $\num=n^4\epsilon^{-3}$ guarantees additive representation error $\epsilon$, by \Cref{THM:WELARE}. Next, note that to obtain an additive estimation error of $\epsilon$, \Cref{thm:propagation} implies that estimating each surrogate value to additive error at most $\epsilon/n$ suffices. Finally, to obtain these guarantees with probability at least $1-\delta$, the union bound implies that a failure probability of $n^{-1}\num^{-1}\delta=n^{-5}\epsilon^{3}\delta$ per surrogate value suffices. Theorems~\ref{thm:chnap} and~\ref{thm:chnfp} therefore imply that $\tilde O(n^4\epsilon^{-3})^4(\epsilon/n)^{-2}(n^{-5}\epsilon^{3}\delta)^{-2})=\tilde O(n^{28}\epsilon^{20}\delta^2)$ design-time samples suffice.
\end{proof}

We conclude by discussing our results' implications for the truthful sample complexity
literature. Our results yield polynomial sample
complexity for revenue maximization with unbounded regular
distributions and general feasibility settings.  This extends the
result of \citet{DHP16} by dropping the downward-closure requirement
on the feasibility constraint. Details can be found in \Cref{APP:SAMPLES}.



\begin{theorem}\label{thm:truthful}
For agents with regularly distributed (but potentially unbounded) values, there is a family of truthful mechanisms that satisfies
\Cref{cond:inference}, \Cref{cond:incentives}, and
\Cref{cond:performance} with $p_{\text{run}}(n,\epsilon^{-1})= O(n\epsilon^{-3})$ and $p_{\text{design}}(n,\epsilon^{-1},\delta^{-1})=\tilde O(n^{10}\epsilon^{-22})$ for multiplicative approximation and the revenue
objective.
\end{theorem}

We conclude by noting that the polynomials in the sample complexity guarantees of
\Cref{thm:nontruthful} and \Cref{thm:truthful} are clearly impractical. We leave optimizing the polynomials and deriving tight tradeoffs between run-time and design-time samples to future work.
	
	\bibliographystyle{apalike}
	\bibliography{bibs}

\appendix
\section{Undoing the Revelation Principle}

\label{app:unrevelation}

Good first-price and all-pay mechanisms for a given environment can be
found by undoing the revelation principle (ignoring computational
complexity).  This construction applies to any revelation mechanism
$\mathcal M$.  For concreteness, imagine a applying this approach to a
single-minded combinatorial auction problem where $\mathcal M$ is the
Vickrey-Clarke-Groves (VCG) mechanism.  We give the all-pay version of the
construction which is slightly simpler, but exhibits the same issues.

\begin{definition}
{\em The all-pay unrevelation mechanism} for a revelation
mechanism $\mathcal M$ is:
\begin{enumerate}
\item 
For each agent $i$ and value $v_i$, calculate $s_i(v_i)$ as the
expected payment in $\mathcal M$ when the agent's value is $v_i$
and other agents' values are drawn from the distribution.

\item 
For each agent $i$, given bid $b_i$ in the un-revelation mechanism,
calculate the agent's value as $v_i = s_i^{-1}(b_i)$.

\item Serve the agents who are served by $\mathcal M$ on values $\mathbf v = (v_1,\ldots,v_n)$; all agents pay their bids.
\end{enumerate} 
\end{definition}

The characterization of Bayes-Nash equilibrium
(Theorem~\ref{thm:myerson}) implies that $s_i$ is the strategy that
agents will employ in equilibrium of the constructed all-pay
mechanism.  Thus, the all-pay mechanism has the same equilibrium
outcome.

From this definition we can see why symmetric and ordinal environments
(i.e., IID position environments) are special.  For these environments
all agents will have the same strategy function, this strategy
function will order higher valued bidders higher (by monotonicity),
and the ordinal environment then implies that all that is needed to
select an outcome is the order of values not their cardinal values.
Thus, the mechanism simplifies to simply ordering the bids and the
strategy function does not need to be calculated.

Even absent computational issues in estimating the strategy functions
so as to implement this mechanism, it is clear that very detailed
distributional information is needed to run the unrevelation
mechanism.  Moreover, the resulting outcomes may be very sensitive to
small errors with the inversion of the strategy function.  This
unrevelation mechanism is not to be considered practical.

\section{Proof of \Cref{LEM:RANKVPRICE}}
\label{APP:RVP}

\begin{numberedlemma}{\ref{LEM:RANKVPRICE}}
The top $k$-of-$\num$ mechanism attains at least a $(1-O(k^{-1/2}))$-fraction of the welfare of the $k/\num$-price posting algorithm with $\num$ i.i.d. agents. If values are regularly distributed, then it attains at least a $(1-O(\min(k,T-k)^{-1/2}))$-fraction of the revenue of the $k/\num$-price posting mechanism.
\end{numberedlemma}

\begin{proof}

We argue separately for the objectives of welfare and virtual surplus, but in both cases, the proof strategy will be the same. We will first explicitly characterize the worst-case distribution for each objective, and then we will analyze the performance ratio of the top-$k$-of-$\num$ mechanism and $k/\num$-price posting mechanism using the correlation gap approach of \citet{Y11} or a similar analysis. In what follows, we will suppress subscripts denoting a particular agent when the agent's identity is irrelevant.
	
	Key to the analysis will be two formulae for the expected surplus of
	an mechanism, in terms of its interim allocation rule $x(\cdot)$ and
	the distribution's value function $\val(\cdot)$. We have that a
	mechanism's surplus is:
	\begin{equation}
	\label{eq:vsurplus}
	\mathbb E_{\quant\sim U[0,1]}[x(\quant)\val(\quant)]=\mathbb E_{\quant\sim U[0,1]}[-x'(\quant)V(\quant)],
	\end{equation}
	where $V(q)=\int_0^\quant \val(z)\,dz$, and the equality follows from
	integration by parts. This is the welfare analog of \Cref{lem:vvals}. We will make use of the latter result for revenue analysis. The only real
	difference between the two objectives is the fact that values are
	always positive, whereas virtual values may be negative.  This will change
	the nature of the approximation, as allocating the wrong agent becomes
	actively harmful to the performance of the algorithm in the case of revenue.\\
	
\textbf{Welfare.} We begin by normalizing the per-agent surplus of the price-posting mechanism to 1. Note that for the $k/\num$-price posting algorithm, the allocation rule is 1 until quantile $k/\num$, and then drops to $0$. It follows from equation (\ref{eq:vsurplus}) that our normalization is equivalent to the assumption that $V(k/\num)=1$. Next, we note that because $\val(\cdot)$ is positive and decreasing, $V(\cdot)$ is increasing and concave, with $V(0)=0$. Let $x(\cdot)$ be the allocation rule of the top-$k$-of-$\num$ algorithm. Given our normalization, the problem of finding the worst-case distribution then becomes:
	\begin{align*}
	\min_{V(\cdot)} &\,\,\,\,\,\mathbb E_{\quant\sim U[0,1]}[-x'(\quant)V(\quant)]\\
	\text{subject to}& \,\,\,\,\,V(0)=0\\
	& \,\,\,\,\,V(k/\num)=1\\
	& \,\,\,\,\,V(\cdot)\text{ concave}\\
	& \,\,\,\,\,V(\cdot)\text{ increasing}
	\end{align*}
	
	This program can be solved by inspection by noticing that there is pointwise minimal function satisfying the constraints of the program: namely, the optimal $V(\quant)$ is linear with slope $v(\quant)=\num/k$ for $\quant\leq k/\num$, and constant at 1 for $\quant\geq k/\num$. This corresponds to the distribution with $k/\num$ mass on the value $\num/k$, and the rest on 0. The result then immediately follows from the correlation gap for $k$-uniform matroids \citep{Y11}.\\
	
	
\textbf{Virtual Surplus.} We now adapt the above proof to virtual surplus. We will again characterize the worst-case distribution, but this time, we cannot simply apply the correlation gap as we did for welfare. As before, we normalize the per-agent virtual surplus from price-posting to 1. This corresponds with setting $R(q)=1$. Subject to normalization, we use properties of revenue curves to derive the worst-case distribution for virtual surplus. We assume values are regularly distributed, which implies that $R(q)$ is concave. Moreover, since $R(q)=v(q)q$, we have that $R(0)=R(1)=0$. These properties yield the following program for the worst-case distribution:
	\begin{align*}
	\min_{R(\cdot)} &\,\,\,\,\,\mathbb E_{q\sim U[0,1]}[-x'(q)R(q)]\\
	\text{subject to}& \,\,\,\,\,R(0)=R(1)=0\\
	& \,\,\,\,\,R(k/\num)=1\\
	& \,\,\,\,\,R(\cdot)\text{ concave}
	\end{align*}
	
	Again, this may be solved by inspection. The worst-case $R(\cdot)$ is triangular, with its apex at $(k/n,1)$. That is, on $[0,k/\num]$, $R(q)$ has slope $\num/k$, and on $[k/\num,1]$, it has slope $-\num/(\num-k)$. In other words, the worst-case distribution for virtual surplus has virtual value $\num/k$ with probability $k/\num$, and virtual value $-\num/(\num-k)$ otherwise.
	
	Let $X^+$ (resp. $X^-$) denote the number of agents with positive (resp. negative) virtual value. Further let $Y^+$ (resp. $Y^-$) denote the number of agents with positive (resp. negative) virtual value allocated by the top-$k$-of-$T$ mechanism, and $Z^+=X^+-Y^+$ (resp. $Z^-=X^--Y^-$) denote the number of agents with positive (resp. negative) virtual value who go unallocated. Since the expected virtual value of any individual agent is $0$, we may write:
\begin{equation*}
    \ex{\tfrac{T}{k}X^+-\tfrac{T}{T-k}X^-}=0.
\end{equation*}
This yields two equivalent expressions for the virtual surplus of the top-$k$-of-$T$ mechanism:
\begin{align}
    \ex{\text{Rev($k$,$T$)}}&=\tfrac{T}{k}\ex{Y^+}-\tfrac{T}{T-k}\ex{Y^-}\label{eq:pos}\\
    &=\tfrac{T}{T-k}\ex{Z^-}-\tfrac{T}{k}\ex{Z^+}.\label{eq:neg}
\end{align}
We can break our analysis into two cases, based on whether $k\leq T-k$ or $k>T-k$. In the former case, we analyze \cref{eq:pos}, and in the latter, we analyze \cref{eq:neg}.
In either case, we can lower bound the first term using the correlation gap as we did for welfare. For the second term, we bound the loss by a new analysis. We argue the $k\leq T-k$ case below. The other case follows by a symmetric argument.

Assume $k\leq T-k$. We must first lower bound $\tfrac{T}{k}\ex{Y^+}$. Noting that the setup is identical to the welfare analysis implies that we may again apply the correlation gap result of \citet{Y11} to get a lower bound of $(1-O(k^{-1/2}))\num$. We next upper bound $\tfrac{T}{T-k}\ex{Y^-}$. The expected value of $Y^-$ can be written as:
\begin{equation}\label{eq:loss}
    \frac{T}{T-k}\sum_{j=0}^{k-1} \binom{T}{j}\bigg(\frac{k}{T}\bigg)^j\bigg(\frac{T-k}{T}\bigg)^{T-j}(k-j).
\end{equation}
For any fixed $k$ and $j$, the summand in \cref{eq:loss} is an increasing function of $\num$. We may therefore upper bound it by its limit as $T\rightarrow \infty$ to obtain:
\begin{align*}
    \frac{T}{T-k}\ex{Y^-}&\leq \frac{T}{T-k}\sum_{j=0}^{k-1}\frac{k^j(k-j)}{e^kj!}\\
    &=\frac{T}{T-k}\bigg[\sum_{j=0}^{k-1}\frac{k^{j+1}}{e^kj!}-\sum_{j=1}^{k-1}\frac{k^{j}}{e^k(j-1)!}\bigg]\\
    &=\frac{T}{T-k}\cdot \frac{k^k}{e^k(k-1)!}\\
    &\leq T\Big(\frac{k^k}{e^kk!}\Big)=T\cdot O(k^{-1/2})
\end{align*}
The last inequality follows from the assumption that $k\leq T-k$, and the last equality from Stirling's approximation.
\end{proof}
\section{Proof of \Cref{LEM:TOPPROMOTION}}
\label{APP:BUFFER}

\begin{numberedlemma}{\ref{LEM:TOPPROMOTION}}
For each agent $i$, let $\phi_i:[0,1]\rightarrow \mathbb R$ be an arbitrary nonincreasing virtual value function satisfying $\int_0^1 \phi_i(q)\,dq\geq 0$. The $q$-buffering rule for $\allocs$ attains at least a $(1-\frac{q}{1-q})(1-q)(1-2(n-1)q)$-fraction
	of the virtual surplus of $\allocs$.
\end{numberedlemma}

The proof of Lemma~\ref{lem:toppromotion} will proceed in two main steps. First, we will show that applying the quantile remapping procedure in Definition~\ref{def:hedge} to a single agent $i$ (leaving other agents' quantiles untouched) cannot reduce the expected virtual surplus from that agent by too much. This will follow from \Cref{lem:apx}, which relates the virtual surpluses of allocation rules with inverses that are multiplicatively close. Second, we will show that subsequently applying the quantile remapping procedure to the populations other than $i$ also does not significantly reduce the expected virtual surplus from $i$. This will follow from the fact that the distribution of quantiles input to the base allocation algorithm is identical, conditioned on no agents having extreme quantiles. 

We begin with the single-agent analysis. Note that for a single agent, the $q$-buffering procedure can be thought of as two composed steps. First is a \emph{top promotion} procedure, which remaps sufficiently low quantiles to $0$ while remapping the remaining quantiles to induce a uniform distribution over $[0,1]$. Top promotion is then composed with \emph{bottom demotion}, which performs analogous transformation, mapping high quantiles to $1$ and mapping the rest of the interval to $[0,1]$. We formalize this as follows:

\begin{definition}
	Given a monotone single-agent allocation rule $x$ and quantile $\underline q$, the \emph{top promotion algorithm} for $x$ and $\underline q$  runs $x$ on the agent with quantiles transformed as follows:
	\begin{itemize}
		\item For any $q\in[0,\underline q]$, return $0$.
		\item For any $q\in[\underline q,1]$, return $(q-\underline q)/(1-\underline q)$.
	\end{itemize}
\end{definition}

\begin{definition}
	Given a monotone single-agent allocation rule $x$ and quantile $\overline q$, the \emph{bottom demotion algorithm} for $x$ and $\overline q$ runs $x$ on the agent with quantiles transformed as follows:
	\begin{itemize}
		\item For any $q\in [0,\overline q]$, return $q/\overline q$.
		\item For any $q\in [\overline q, 1]$, return $1$.
	\end{itemize}
\end{definition}

The interim allocation rule faced by an agent $i$ after applying the extremal buffering procedure to just $i$ is the composition of the bottom demotion algorithm for quantile $1-q$ composed with the top promotion algorithm for original allocation rule $x_i$ and quantile $q/1-q$.
Consequently, we may analyze the loss from applying these two transformations separately and multiply the losses.

We first analyze bottom demotion. While bottom demotion does not
produce an allocation rule which is multiplicatively close to the
original rule, it does produce one which is close in the sense that
its inverse is close to the inverse of the original rule. We may therefore apply  \Cref{lem:apx}
from \Cref{sec:BAR}, which we restate here for convenience.  Recall, a virtual value
function is $\virt(\cdot)$ and has cumulative virtual curve
$\cumvirt(\quant) = \int_0^\quant \virt(r)\,dr$.

\begin{numberedlemma}{\ref{lem:apx}}
For virtual value function $\virt(\cdot)$ and cumulative virtual value
$\cumvirt(q)=\int_0^{\quant} \virt(r)\,dr$ satisfying $\cumvirt(\alpha
\,\quant)\geq \alpha\, \cumvirt(\quant)$ for all quantiles $\quant$
and $\alpha \in [0,1]$, and any two allocation rules $\alloc_1$ and
$\alloc_2$ that satisfy $\alloc_1^{-1}(z)\geq \alloc_2^{-1}(z)\geq
\tfrac{1}{\alpha}\alloc_1^{-1}(z)$, the virtual surpluses satisfy
	\begin{align*}
	\expect[\quant]{\virt(\quant)\,\alloc_2(\quant)} + \cumvirt(0)\,\alloc_2(0)
   &\geq \tfrac{1}{\alpha} [\expect[\quant]{\virt(\quant)\,\alloc_1(\quant)} + \cumvirt(0)\,\alloc_1(0). 
	\end{align*} 
\end{numberedlemma}

\begin{lemma}\label{lem:bottomdemote}
	Let $\phi:[0,1]\rightarrow\mathbb R$ be an arbitrary nonincreasing virtual value function. Given a monotone single-agent allocation rule x and quantile $\overline q$, the bottom demotion algorithm for $x$ and $\overline q$ obtains at least a $\overline q$-fraction of the expected virtual surplus of $x$.
\end{lemma}
\begin{proof}
	The lemma will follow from a straightforward application of Lemma~\ref{lem:apx}. For a quantile $q$ receiving allocation $x(q)$ from the base algorithm, the quantile receiving this probability of allocation under the bottom demotion algorithm will be $\overline q q$. Hence, $x^{-1}(z)\geq \hat x^{-1}(z)=\overline q x^{-1}(z)$. Since $\phi$ is nonincreasing in $q$, we have that $R(q)=\int_0^q \phi(r)\,dr$ satisfies $R(\alpha q)\geq \alpha R(q)$ for all $\alpha\in [0,1]$. Lemma~\ref{lem:apx} therefore implies the desired result.
\end{proof}

We have shown that bottom demotion results in an allocation rule which has an inverse close to that of the original rule on which it is based. To derive an approximation result for the top promotion procedure requires 
 a more nuanced version of the same approach, based on two observations.
First, the ``unallocation rules'', i.e., $y(q) = 1-x(1-q)$ for allocation
rule $x(q)$, satisfy the inverse-approximation condition of the lemma.
Second, the virtual surplus of the unallocation rule is given by the
expected virtual value plus the negative virtual surplus of the
unallocation rule.  Specifically ${\mathbb E}_q[\phi(q)\,x(q)] =
{\mathbb E}_q[\phi(q)] + {\mathbb E}_q[(-\phi(1-q))\,y(q)]$.
While virtual values for revenue always satisfy the property that rays
from the origin cross the cumulative virtual value curve from below,
this property does not generally hold for the negative virtual values
$-\phi(1-q)$.  Regularity, i.e., monotonicity of the original virtual
value function, however, implies the property for negative virtual
values.  These observations are formally summarized in the subsequent
lemma:

\begin{lemma}\label{lem:toppromote}
	Let $\phi:[0,1]\rightarrow \mathbb R$ be a nonincreasing virtual value function with $\int_0^1 \phi(q)\,dq\geq 0$. Given a monotone single-agent allocation rule $x$ and quantile $\underline q$, the top promotion algorithm for $x$ and $\underline q$ obtains at least a $(1-\underline q)$-fraction of the expected virtual surplus of $x$.
\end{lemma}
\begin{proof}
Note that the expected virtual surplus from any allocation rule $x$ is can be written as $\int_0^1 \phi(q)x(q)\, d q=\int_0^1 \phi(q)\,dq-\int_0^1\phi(q)(1-x(q))\,dq$. Specifically, let $\hat x$ be the allocation rule of the top promotion algorithm, and $x$ the allocation rule of the original algorithm. Moreover, define $\hat y(q)=1-\hat x(1-q)$ and $y(q)=1-x(1-q)$ to be the corresponding ``unallocation rules.'' We will show that 
	\begin{equation}
	\label{eq:dealloc}
	\int_0^1-\phi(1-q)\hat y(q)\,dq \geq (1-\underline q) \int_0^1-\phi(1-q)y(q)\,dq.
	\end{equation}
	Since, $\int_0^1 \phi(q)\,dq\geq 0$, this will prove that $\int_0^1 \phi(q)x(q)\, d q\geq (1-\underline q)\int_0^1 \phi(q)\hat x(q)\, d q$.
	
	To prove (\ref{eq:dealloc}), note that the definition of the top promotion algorithm can be manipulated to obtain $\hat x^{-1}(z)=x^{-1}(z)(1-\underline q)+\underline q$. Moreover, by the definition of $y$ and $\hat y$, we have $y^{-1}=1-x^{-1}(1-z)$ and $\hat y^{-1}=1-\hat x^{-1}(1-z)$. Combining these three equations yields that $y^{-1}(z)\geq \hat y^{-1}(z)= (1-\underline q) y^{-1}(z)$ for all $z\in[0,1]$. Moreover, note that $-\phi(1-q)$ is decreasing in $q$. This implies that $R(q)/q\geq -\phi(1-q)$, where $R(q)=\int_0^q -\phi(1-q)\,dq$. We may therefore apply Lemma~\ref{lem:apx}, which yields (\ref{eq:dealloc}).
\end{proof}

Combining Lemmas~\ref{lem:bottomdemote} and~\ref{lem:toppromote} yields the following:
\begin{lemma}\label{lem:singleagent}
	Let $\phi:[0,1]\rightarrow \mathbb R$ be an arbitrary nonincreasing virtual value function satisfying $\int_0^1 \phi(q)\,dq\geq 0$, and consider an arbitrary agent $i$. The $q$-buffering algorithm for ${\mathbf x}$ and $\hat Q$, when applied only to agent $i$, attains at least a $(1-\frac{q}{1-q})(1-q)$-fraction of the expected virtual surplus for $i$.
\end{lemma}

Having derived a single-agent guarantee, we now show that applying the $q$-buffering procedure to all agents at once, rather than just to one agent, yields only a small additional loss. Intuitively, for each agent, the mechanism only appears different when another agent has an extreme quantile which is promoted or demoted by the buffering procedure. The probability of such an event can be controlled using the union bound. Formally, we have:

\begin{proof}[Proof of Lemma~\ref{lem:toppromotion}]
	
	Lemma~\ref{lem:singleagent} states that the virtual surplus lost from applying the $q$-buffering procedure to a single agent is small. We now argue that applying the algorithm to all agents at once does not incur much additional loss. We argue from the perspective of some agent $i$. 
	
	The key observation in our analysis is
	that the distribution of the quantiles of other agents input to $\allocs$ is nearly unchanged by the extremal buffering procedure.  In particular, note
	that the probability that one or more agents other
	than $i$ with quantiles set to $0$ or $1$ by the extremal buffering procedure is at
	most $(n-1)(1-2q)$, by the union bound.  Conditioned on there being no
	such agents, the distribution of quantiles input to $\allocs$ remains uniform. It follows that the virtual surplus from agent $i$
	conditioned on this event is identical to the virtual surplus from
	the extremal buffering procedure applied only to $i$.
	
	In the event that there are one
	or more agents from populations other than $i$ who have top quantiles (which are promoted) or bottom quantiles (which are demoted), we note that the conditional virtual surplus from population $i$ is nonnegative. To see this, let $\tilde x_i$ be the $q$-buffered interim allocation rule for agent $i$, conditioned on the event $\mathcal E$ that at least one agent $j$ other than $i$ has a quantile in $[0,q]\cup[1-q,1]$. Since $\allocs$ is a monotone function of its inputs, it must be that $\tilde x_i$ is  nondecreasing. The expected virtual surplus from agent $i$ conditioned on $\mathcal E$ is $\int_0^1 \phi_i(q)\tilde x_i(q)\,dq$. By assumption, $\int_0^1 \phi_i(q)\,dq\geq 0$, so it must also be the case that $\int_0^1 \phi_i(q)\tilde x_i(q)\,dq\geq 0$.
	
	To conclude the proof, let $\hat x_i$ denote the interim allocation rule after extremal buffering, conditioned on the event $\overline{\mathcal E}$. The total virtual surplus from agent $i$ is:
	\begin{equation}
	\notag
	\text{Pr}(\mathcal E)\int_0^1 \phi_i(q)\tilde x_i(q)\,dq+\text{Pr}(\overline{\mathcal E})\int_0^1 \phi_i(q)\hat  x_i(q)\,dq
	\end{equation}
	By the union bound, $\text{Pr}(\overline{\mathcal E})=1-\text{Pr}(\mathcal E)\geq 1-(n-1)(1-2q).$ By Lemma~\ref{lem:singleagent} and the fact that, conditioned on $\overline{\mathcal E}$, the distribution of quantiles $i$ perceives from other agents is uniform implies that 
	\begin{eqnarray}
	\notag \int_0^1 \phi_i(q)\hat x_i(q)\,dq&\geq (1-\frac{q}{1-q})(1-q)\int_0^1 \phi_i(q)x_i(q)\,dq\\
	\notag &\geq (1-\frac{q}{1-q})(1-q)\int_0^1 \phi_i(q)x_i(q)\,dq.
	\end{eqnarray}
	Since we have shown that $\int_0^1 \phi_i(q)\hat  x_i(q)\,dq\geq 0$, we can combine the above to obtain:
	\begin{eqnarray}
	\notag
	\text{Pr}(\mathcal E)\int_0^1 \phi_i(q)\tilde x_i(q)\,dq+\text{Pr}(\overline{\mathcal E})\int_0^1 \phi_i(q)\hat  x_i(q)\,dq\\
	\notag \geq \big(1-\frac{q}{1-q}\big)(1-q)(1-2(n-1)q)\int_0^1\phi_i(q)x_i(q)\,dq.
	\end{eqnarray}
	Summing over agents proves the lemma.
\end{proof}

Since we have reasoned about abstract virtual surplus, which could be value or Myerson virtual value, we obtain revenue and welfare approximation results for extremal buffering.
\section{Proof of \Cref{lem:resampling}}
\label{APP:RESAMPLE}

\begin{numberedlemma}{\ref{lem:resampling}}
For each agent $i$, let $\phi_i(q_i)$ be a nonincreasing virtual value function, and let $\mathbf x$ be an a monotone allocation algorithm. Further, for some $k\leq T/2$, let $\mathbf{\hat x}$ denote the $k/T$-buffering algorithm for $\mathbf x$. Then the resampling algorithm for $\mathbf{\hat x}$ obtains at least a $\frac{k}{k+1}(1-\frac{q}{1-q})(1-q)(1-2(n-1)q)$ the expected virtual surplus under $\mathbf x$, with $q=k/T$.
\end{numberedlemma}

To prove Lemma~\ref{lem:resampling}, we first rephrase a useful lemma from \citet{RS16}, which characterizes the relationship between the virtual surplus of resampling rules and their base allocation rules. Intuitively, it states that resampling can be treated as a distributional transformation - rather than replacing the allocation on each interval with its average, you could think about replacing the virtual values on each interval with their average. Formally:

\begin{lemma}[\citealp{RS16}]\label{lem:switching}
	For every agent $i$, let $\phi_i$ be a virtual value function for each agent, with cumulative virtual surplus $R_i(q)=\int_0^q\phi_i(r)\,dr$. For any allocation algorithm $\allocs$, let $\overline{\allocs}$ denote the resampling algorithm based on $\allocs$. Then for each agent $i$ we have:
	\begin{equation}
	\mathbb E[-\overline x_i'(q_i)R_i(q_i)]=\mathbb E[-\overline x_i'(q_i)\overline R_i(q_i)]=\mathbb E[- x_i'(q_i)\overline R_i(q_i)],
	\notag
	\end{equation}
where $\overline R_i(q_i)$ is a piecewise linear approximation to $R_i$ given by:
\begin{equation}
\notag
\overline R_i(q)=T(q-\tfrac{j-1}{T})R_i(\tfrac{j}{T})+T(\tfrac{j}{T}-q)R_i(\tfrac{j-1}{T})\text{ for }q\in[\tfrac{j-1}{T},\tfrac{j}{T}].
\end{equation}
\end{lemma}

\begin{proof}[Proof of Lemma~\ref{lem:resampling}]
	Let $\hat{\allocs}$ denote the  $k/T$-buffering rule, and $\overline{\allocs}$ denote the resampling rule based on $\hat{\allocs}$. Define
\begin{equation}
\notag
\hat R_i(q_i)=\begin{cases}
R_i(q_i)&\text{ for }q_i\in[\tfrac{k}{T},\tfrac{T-k}{T}]\\
\tfrac{Tq_i}{k}R_i(\tfrac{k}{T})&\text{ for }q_i\in[0,\tfrac{k}{T}]\\
R_i(\tfrac{T-k}{T})+\tfrac{T}{k}(q-\tfrac{T-k}{T})(R_i(1)-R_i(\tfrac{T-k}{T}))&\text{ for }q_i\in[\tfrac{T-k}{T},1].
\end{cases}
\end{equation}
That is, $\hat R_i$ is equal to $R_i$ except on $[0,\tfrac{k}{T}]$ and $[\tfrac{T-k}{T},1]$, where it is a linear interpolation between the values of the $R_i$ at the endpoints of those intervals.
	
	We will argue the following sequence of inequalities for each agent:
	\begin{eqnarray}
	\notag\mathbb E[-\overline x_i'(q_i)R_i(q_i)]&=\mathbb E[-\hat x_i'(q_i)\overline R_i(q_i)]\\
	\notag&\geq \frac{k}{k+1}\mathbb E[-\hat x_i'(q_i)\hat R_i(q_i)]\\
	\notag&=\frac{k}{k+1}\mathbb E[-\hat x_i'(q_i)R_i(q_i)].
	\end{eqnarray}
	The first equality follows from Lemma~\ref{lem:switching}.
We will argue the second and third lines shortly.  Since the first and last expressions in the above chain are the respective virtual surpluses of the resampling and buffering rules, respectively, the result will follow.
	
	To see that $\overline R_i(q_i)\geq \tfrac{k}{k+1}\hat R_i(q_i)$, note that $\overline R_i(q_i)=\hat R_i(q_i)$ for all quantiles in $[0,\hat q_i^k]\cup[\hat q_i^{T-k},1]$. Otherwise, consider $q_i\in[\tfrac{j}{T},\tfrac{j+1}{T}]$ for $j\in\{k,\ldots,T-k-1\}$. Assume without loss of generality that $R_i(\tfrac{j}{T})\leq R_i(\tfrac{j+1}{T})$; a symmetric argument will apply to the case where $R_i(\tfrac{j}{T})\geq R_i(\tfrac{j+1}{T})$. The concavity of $R_i$ implies that for all $q\in[\tfrac{j}{T},\tfrac{j+1}{T}]$, $R_i(q)\leq \tfrac{Tq}{j}R_i(\tfrac{j}{T})$. Moreover, note that $\overline R_i(q)\geq R_i(\tfrac{j}{T})$ for all $q\in[\tfrac{j}{T},\tfrac{j+1}{T}]$. Since $\tfrac{Tq}{j}\leq \tfrac{k+1}{k}$, it follows that $\overline R_i(q_i)\geq \tfrac{k}{k+1} R_i(q_i)=\tfrac{k}{k+1}\hat R_i(q_i)$ for all $q_i\in[\tfrac{k}{T},\tfrac{T-k}{T}]$, and $\overline R_i(q_i)=\hat R_i(q_i)$ elsewhere.
    
    Finally, we argue that $\mathbb E[-\hat x_i'(q_i)\hat R_i(q_i)]=E[-\hat x_i'(q_i)R_i(q_i)]$. Note that since $\hat R_i(q_i)=R_i(q_i)$ for $q_i\in [\tfrac{k}{T},\tfrac{T-k}{T}]$, it follows that 
\begin{equation*}
    \int_{\tfrac{k}{T}}^{\tfrac{T-k}{T}} -\hat x_i(q_i)R_i(q_i)\, dq_i=\int_{\tfrac{k}{T}}^{\tfrac{T-k}{T}} -\hat x_i(q_i)\hat R_i(q_i)\, dq_i.
\end{equation*}
To prove the claim, notice that $\hat x_i'(q_i)=0$ on $[0,\tfrac{k}{T}]$ and $[\tfrac{T-k}{T},1]$. Hence,
\begin{equation*}
    \int_{0}^{\tfrac{k}{T}} -\hat x_i'(q_i)R_i(q_i)\, dq_i=\int_{0}^{\tfrac{k}{T}} -\hat x_i'(q_i)\hat R_i(q_i)\, dq_i.
\end{equation*}
and
\begin{equation*}
    \int_{\tfrac{T-k}{T}}^{1} -\hat x_i'(q_i)R_i(q_i)\, dq_i=\int_{\tfrac{T-k}{T}}^{1} -\hat x_i'(q_i)\hat R_i(q_i)\, dq_i.
\end{equation*}
This proves the lemma.
\end{proof}
\section{Proof of \Cref{THM:OPTIMALSRM}}
\label{app:optimal-srm}

\begin{numberedtheorem}{\ref{thm:optimalSRM}}
The welfare-optimal surrogate-ranking mechanism uses surrogate values given by
$\suri^j=\expect[\quanti]{\vali(\quanti) \given \ranki=j}$. For
regular distributions, the revenue-optimal surrogate-ranking mechanism
uses surrogate values $\suri^j=\expect[\quanti]{\margi(\quanti) \given
  \ranki=j}$ where $\margi(\quanti) = \vali(\quanti) +
\quanti\,\vali'(\quanti)$ is the marginal revenue of agent $i$ at
quantile $\quanti$.
\end{numberedtheorem}

\begin{proof}
We define the \emph{rank-based allocation problem} as follows: the
designer must choose an allocation rule $\ralloc$ which takes as input
the rank $\ranks = (\ranki[1],\ldots,\ranki[n])$ of each agent among
$T-1$ runtime samples for their distributions and outputs a (possibly
randomized) feasible allocation $\rallocs(\ranks)$. As a constraint,
$\rallocs$ must be monotone in the ranks of each agent. The objective
is to maximize $\expect{\sum_i\virti(\quanti)\ralloci(\ranks)}$ for
some given virtual value function $\virti(\cdot)$, where the
expectation is over agents' quantiles being uniformly distributed and
over the runtime samples used to compute $\ranks$. For example,
$\virti(\quanti)=\vali(\quanti)$ corresponds to welfare maximization
and $\virti(\quanti)=\margi(\quanti)$ corresponds to revenue
maximization.

The rank-based allocation problem can be solved by inspection. Fixing
an allocation rule, the objective can be rewritten as
$\sum_i\expect{\virti(\quanti)\given \ranki}\,\ralloci(\ranks)$ by
linearity of expectation. From this expression, it becomes clear that
the optimal solution chooses the allocation which maximizes the
quantity $\sum_i \expect{\virti(\quanti) \given
  \ranki}\,\ralloci(\ranks)$. Note that if $\virti(\cdot)$ is
monotone, then this allocation rule will be monotone as well, and
therefore feasible.\footnote{If it is not monotone, then then the
  resulting surrogate values may not be monotone, if the surrogate
  values by this approach are not monotone, they can be ironed using
  the standard procedure.}  Setting these expected order statistics as
surrogate values, the surrogate-ranking mechanism (\Cref{d:SRA}) optimizes this quantity.
\end{proof}

\section{Sample Complexity in General Feasibility Environments}
\label{APP:SAMPLES}

In this appendix, we show how to use polynomially many truthfully sampled values to estimate the revenue
of the $k$-unit, $T$-buyer auction for all $k$ from $1$ to $T-1$
simultaneously. We assume the value distribution is regular, but may have unbounded
support.

\begin{theorem}\label{thm:truthfulestimation}
Given a regular value distribution, let $R^*=\max_\quant \rev(\quant)$ be the {\em monopoly revenue}. For
any $\epsilon,\delta\in(0,1)$, $O(T^6\epsilon^{-4}\log(T/\delta))$ samples
suffice to estimate the expected revenue of a $k$-unit, $T$-bidder
auction up to additive error $\epsilon R^*$ for all
$k\in\{1,\ldots,T\}$ simultaneously with probability at least
$1-\delta$.
\end{theorem}

We may combine \Cref{thm:truthfulestimation} our bounds on the propagation of error and the representation error of surrogate ranking mechanisms (\Cref{thm:propagation} and \Cref{THM:WELARE}, respectively) to obtain the desired sample complexity result for truthful mechanisms, restated below. 

\begin{numberedtheorem}{\ref{thm:truthful}}
For agents with regularly distributed (but potentially unbounded) values, there is a family of truthful mechanisms that satisfies
\Cref{cond:inference}, \Cref{cond:incentives}, and
\Cref{cond:performance} with $p_{\text{run}}(n,\epsilon^{-1},\delta^{-1})=\tilde O(n\epsilon^{-3})$ and $p_{\text{design}}(n,\epsilon^{-1},\delta^{-1})=\tilde O(n^{10}\epsilon^{-22})$ for multiplicative approximation and the revenue
objective.
\end{numberedtheorem}
\begin{proof}
First, note that $\num=n\epsilon^{-3}$ surrogate values per agent suffice to obtain multiplicative representation error at most $\epsilon$, by \Cref{THM:WELARE}. Now let $R_i^*$ denote the monopoly revenue for agent $i$'s distribution, and let $\text{OPT}$ denote the optimal revenue. Estimating each surrogate value to an additive $\epsilon n^{-1}R_i$ suffices to obtain multiplicative estimation error of $O(\epsilon)$, as $\sum_i R_i^*\leq n \text{OPT}$. Finally, failure probability of $\delta/n$ suffices for each agent to obtain these surrogate value estimates with probability at least $1-\delta$, by the union bound. Hence, $\tilde O((n\epsilon^{-3})^6(\epsilon/n)^{-4})=\tilde O(n^{10}\epsilon^{-22})$ samples suffice to obtain multiplicative loss of $\epsilon$ with probability at least $1-\delta$.
\end{proof}

We now outline the high-level strategy for proving
\Cref{thm:truthfulestimation}. First, for any $j\in\{0,\ldots, T\}$,
let $P_j$ denote the expected revenue of a $j$-unit auction with $T$
agents, and let $\psi^k$ denote the expected $k$th order statistic of
the virtual value distribution. Then we may write:
$\psi^k=P_k-P_{k-1}$. It follows that to estimate $\psi^k$ with
additive error $\epsilon R^*$, it suffices to estimate $P_k$ and
$P_{k-1}$ with error $\epsilon R^*/2$. 

To estimate the $k$-unit revenue $P_k$, we will estimate the revenue contribution from a single agent, $\mathbb E_\quant [\price(\quant)]$. Note that from an agent's perspective, playing in a $k$-unit auction is equivalent to facing a posted price distributed according to $\val(\quant_{k:T-1})$, where $\quant_{k:T-1}$ denotes the $k$th lowest order statistic of $T-1$ $U[0,1]$ random variables. A basic property of the order statistics of uniform variables is that $\quant_{k:T-1}$ is distributed according to $\text{Beta}(k,T-k)$. Let $f_{k:T-1}$ denote the density of $\quant_{k:T-1}$, and let $\rev=\val(\quant)\quant$ denote the price-posting revenue curve. We have:

\begin{lemma}\label{lem:multiunit}
For any $k\in\{1,\ldots,T-1\}$, $P_k=T\int_0^1 f_{k:T-1}(\quant)\rev(\quant)\,d\quant$.
\end{lemma}

In what follows, we will show how to estimate $\rev(\quant)$ for all $\quant\in[1/T^2,1-1/T^2]$. We will further show that the loss from misestimating $\rev(\quant)$ on $[0,1/T^2]\cup[1-1/T^2,1]$ is minimal. This will immediately imply Theorem~\ref{thm:truthfulestimation}.

\subsection{Estimation on a Grid}
To create a skeleton for our estimated revenue curve, we first estimate $\rev(\quant)$ for $\quant\in \{1/K,\ldots,1-1/K\}$, for some large $K$ to be determined later. The concavity of $\rev$ will imply that the rest of the revenue can be estimated with low error via interpolation.

\begin{lemma}\label{lem:grid}
Let $R^*=\max_\quant \rev(\quant)$. For any $1>\epsilon>0$ and $\delta$, $O(K^2\epsilon^{-2}\log(K/\delta))$ samples suffice to guarantee that $K\hat \val_j/j\in[R(j/K)-\epsilon R^*,R(j/K)+\epsilon R^*]$ for all $j\in\{1,\ldots,K-1\}$ simultaneously with probability at least $1-\delta$.
\end{lemma}

Consider drawing $N=Km-1$ samples, for some positive integer $m$. Note that the $jm$th smallest sample has mean $j/K$. Let $\hat\val_j$ denote the value of this sample. We will use $\hat\val_j$ as an estimator for $\val(j/K)$, and $K\hat \val_j/j$ as an estimator for $\rev(j/K)=T\val(j/K)/j$. The proof will proceed in two steps. First, we will use a Chernoff bound to show that with high probability, $\quant(\hat \val_j)$ is close to $j/K$. We will then use the concavity of $\rev$ to show that $\hat\val_j$ is close to $v(j/K)$.

\begin{lemma}\label{lem:singlequant}
For any $1>\epsilon>0$ and $\delta$, $Km=O(K\epsilon^{-2}\log(1/\delta))$ samples suffice to guarantee that $q_j\in[(1-\epsilon)j/K,(1+\epsilon)j/K]$ with probability at least $1-\delta$.
\end{lemma}
\begin{proof}
We now bound the probability of significantly misestimating $\quant(\hat\val_j)$. Let $\quant_j=\quant(\hat \val_j)$. Note that for any $\epsilon\in(0,1)$, the number of samples with quantile at most $(1+\epsilon)j/K$ is the sum of $N$ iid Bernoulli random variables with mean $(1+\epsilon)j/K$. Moreover, note that $q_j>(1+\epsilon)j/K$ only if at most $jm-1$ samples overall have quantile at most $(1+\epsilon) j/K$. Chernoff then gives us that
\begin{equation}
\label{eq:chernoffover}
\text{Pr}[q_j\geq (1+\epsilon)j/K]\leq e^{-\left(1-\frac{(jm-1)K}{(1+\epsilon)Nj}\right)^2\frac{(1+\epsilon)Nj}{2K}}=e^{-\Omega(\epsilon^2mj)}
\notag
\end{equation}

Similarly, the the number of samples with quantile at most $(1-\epsilon)j/K$ is the sum of $N$ iid Bernoullis with mean $(1-\epsilon)j/K$. We have $q_j<(1-\epsilon)j/K$ only if at least $jm$ samples have quantile at most $(1-\epsilon)j/K$. Chernoff then gives us:
\begin{equation}
\label{eq:chernoffunder}
\text{Pr}[q_j\leq (1-\epsilon)j/K]\leq e^{-\Omega(\epsilon^2mj)}
\notag
\end{equation}
It follows that as $m=O(\epsilon^{-2}j^{-1}\log(1/\delta))$ suffices for suffices for $q_j\in[(1-\epsilon)j/K,(1+\epsilon)j/K]$ with probability at least $1-\delta$. This bound is worst when $j=1$. Hence $mK=O(K \epsilon^{-2}\log(1/\delta))$ samples suffice overall.
\end{proof}

We next show that if $q_j$ is close to $j/K$, then $K\hat \val_j/j$ will be a close estimate of $R(j/K)$.

\begin{lemma}\label{lem:singlerev}
Assume $q_j\in[(1-\epsilon)j/K,(1+\epsilon)j/K]$. Then $K\hat \val_j/j\in[(1-\epsilon K)R(j/K),(1+\epsilon K)R(j/K)]$.
\end{lemma}
\begin{proof}
We will show that $\hat \val_j\in[(1-\epsilon K)v(j/K),(1+\epsilon K)v(j/K)]$. The result follows from multiplying by $K/j$. We first argue the case where $q_j<j/K$. That is, $\hat \val_j\geq \val(j/K)$. Concavity of the revenue curve implies that $R(q_j)\leq \tfrac{1-q_j}{1-j/K}\rev(j/K)$. In the event that $q_j\geq (1-\epsilon)j/K$, we have:
\begin{align*}
\hat \val_jq_j\leq \frac{1-q_j}{1-j/K}\frac{v(j/K)j}{K}\leq (1+\epsilon K)v(j/K).
\end{align*}
Dividing the inequality by $q_j$ and using the fact that $q_j\geq (1-\epsilon)j/K$ yields:
\begin{equation}
\hat \val_j\leq \frac{(1-(1-\epsilon)j/K)}{(1-j/K)(1-\epsilon)}v(j/K).
\notag
\end{equation}
A symmetric argument applies when $q_j>j/K$. By concavity, we have $\rev(q_j)\geq \tfrac{1-q_j}{1-j/K}\rev(j/K)$. In the event that $q_j\leq (1+\epsilon)j/K$, we have
\begin{equation}
v_j\geq \frac{(1-(1+\epsilon)j/K)}{(1-j/K)(1+\epsilon)}v(j/K)\geq (1-\epsilon K)\val(j/K).
\notag
\end{equation}
\end{proof}
Lemma~\ref{lem:grid} follows from the above results by applying the union bound.
\begin{proof}[Proof of Lemma~\ref{lem:grid}]
Lemmas~\ref{lem:singlequant} and~\ref{lem:singlerev} imply that $O(K^2\epsilon^{-2}\log(K/\epsilon))$ samples suffice to guarantee that $K\hat \val_j/j\in[(1-\epsilon)R(j/K),(1+\epsilon)R(j/K)]$ with probability at least $1-\delta/K$. Applying the union bound, it follows that $O(K^2\epsilon^{-2}\log(K/\epsilon))$ samples suffice to guarantee that this guarantee holds for all $j\in\{1,\ldots,K-1\}$ simultaneously with probability at least $1-\delta$. Noting that $R(j/K)\leq R^*$ implies the lemma.
\end{proof}

\subsection{Estimating The Interior Revenue Curve}

In the previous section, we showed how to estimate $\rev(j/K)$ up to an additive $\epsilon R^*$. Pick some $q\in[(j-1)/K,j/K]$ with $j\in\{2,\ldots, K-1\}$. We may linearly interpolate between our estimates of $\rev((j-1)/K)$ and $\rev(j/K)$ to estimate $\rev(q)$. More formally, let $\hat \rev_j$ denote the estimator for $\rev(j/K)$ analyzed in the previous section. We will estimate $\rev(q)$ as
\begin{equation*}
\hat R(q)=(q-\tfrac{j-1}{K})K\hat R_j+(\tfrac{j}{K}-q)K\hat R_{j-1}
\end{equation*}
We will use concavity of $\rev$ to bound the error from this estimate.

\begin{lemma}\label{lem:interior}
Assume that for all $j\in\{1,\ldots, K-1\}$, $K\hat \val_j/j\in[R(j/K)-\epsilon R^*,R(j/K)+\epsilon R^*]$ for all $j\in\{1,\ldots,K-1\}$. Then for all $q\in[1/K,1-1/K]$:
\begin{equation*}
\hat \rev(q)\in[\rev(q)-(\epsilon+K^{-1}) \rev^*,\rev(q)+\epsilon \rev^*]
\end{equation*}
\end{lemma}
\begin{proof}
We bound the overestimation and underestimation error in turn. First, not that since $\rev$ is concave, it must be that $\rev(q)$ lies above the line between $((j-1)/K,\hat R_{j-1}-\epsilon R^*)$ and $(j/K,\hat R_{j}-\epsilon R^*)$. It follows that $\hat \rev(q)$ can only overestimate by at most $\epsilon R^*$. Next, note that a lower bound on $\hat \rev(q)$ is 
\begin{equation*}
\hat \rev(q)\geq (q-\tfrac{j-1}{K})KR(\tfrac{j-1}{K})+(\tfrac{j}{K}-q)KR(\tfrac{j}{K})-\epsilon R^*
\end{equation*}
In other words, the worst-case underestimation is $\epsilon R^*$, plus the worst-case underestimation of the piecewise linear curve through the points $(j/K,\rev(j/K))$ for all $j\in\{0,K\}$. Using the facts that $\rev$ is concave, $\rev(0)=0$, and $\rev(1)=0$, this underestimation can be shown to be $O(\rev^*/K)$. The lemma follows.

\end{proof}

\subsection{Proof of Theorem~\ref{thm:truthfulestimation}}

We have shown how to estimate the revenue curve to low additive error for quantiles bounded away from $0$ and $1$. We now use this estimator to prove the main sample complexity result of this appendix: that $O(T^6\epsilon^{-4}\log(T/\epsilon))$ samples suffice to estimate estimate $\psi^k$ for all $k\in\{1,\ldots, T-1\}$ to within additive error $\epsilon R^*$ with probability at least $1-\delta$. To do so, we will consider estimating $P_k$ for $k\in\{1,\ldots, T-1\}$ as $T\int_{\epsilon/T^2}^{1-\epsilon/T^2} f_{k:T-1}(\quant)\hat \rev(\quant)\,d\quant$.

First we show that ignoring the revenue contribution from the intervals $[0,\epsilon/T^2]$ and $[1-\epsilon/T^2,1]$ cannot hurt our estimate by much. Let $F_{k:T-1}$ denote the CDF of the $k$th lowest order statistic of $T-1$ $U[0,1]$ random variables. Then using properties of the Beta distribution, we have that $F_{k:T-1}(\epsilon/T^2)\leq \epsilon/T$ and $1-F_{k:T-1}(1-\epsilon/T^2)\leq \epsilon/T$ for all $k\in\{1,\ldots,T-1\}$. Since for $\quant\in[0,\epsilon/T^2]\cup[1-\epsilon/T^2,1]$, $\rev(\quant)\leq R^*$, it follows from \Cref{lem:multiunit} that
\begin{equation*}
P_k-T\int_{\epsilon/T^2}^{1-\epsilon/T^2} f_{k:T-1}(\quant) \rev(\quant)\,d\quant\leq \epsilon R^*
\end{equation*}
Now assume that for all $\quant\in[\epsilon/T^2,1-\epsilon/T^2]$, $|\hat R(q)-R(q)|\leq \epsilon T^{-1} R^*$. If this is the case, then we have 
\begin{equation*}
\left |T\int_{\epsilon/T^2}^{1-\epsilon/T^2} f_{k:T-1}(\quant) \rev(\quant)\,d-T\int_{\epsilon/T^2}^{1-\epsilon/T^2} f_{k:T-1}(\quant) \hat \rev(\quant)\,d\quant\right |\leq \epsilon R^*.
\end{equation*}
Choosing $K=T^2/\epsilon$ in \Cref{lem:interior} therefore yields the result.

\end{document}